\documentclass[reqno,a4paper,12pt]{amsart}

\usepackage{amsmath,amstext,amsfonts,amsbsy,eucal,amssymb}
\usepackage[latin1]{inputenc}
\usepackage{float} 

\numberwithin{equation}{section}

\newtheorem{theorem}{Theorem}[section]
\newtheorem{lemma}[theorem]{Lemma}

\newtheorem{proposition}[theorem]{Proposition}

\newtheorem{conjecture}[theorem]{Conjecture}

\theoremstyle{definition}
\newtheorem{definition}[theorem]{Definition}
\newtheorem{example}[theorem]{Example}
\newtheorem{remark}[theorem]{Remark}

\begin{document}

\parskip 4pt
\baselineskip 16pt


\title[An infinite family of homogeneous discrete equations]
{An infinite family of homogeneous discrete equations with the Laurent property}

\author[Andrei K. Svinin]{Andrei K. Svinin}
\address{Andrei K. Svinin, 
Matrosov Institute for System Dynamics and Control Theory of 
Siberian Branch of Russian Academy of Sciences,
P.O. Box 292, 664033 Irkutsk, Russia}
\email{svinin@icc.ru}

%
%

\date{\today}

\keywords{Somos-5, Gale-Robinson recurrence, Laurent property, discrete Mumford dynamical system}


\begin{abstract}
We present and investigate a new infinite family of homogeneous  equations which possess the Laurent property. The first representative in this family is the well-known  Somos-5 recurrence. 
\end{abstract}

\maketitle

\section*{Introduction}

Recently, considerable interest has been shown in nonlinear equations of the form
\begin{equation}
t_nt_{n+N}=L\left(t_{n+1},\ldots, t_{n+N-1}\right),
\label{887777}
\end{equation}
which possess the Laurent property \cite{Alman}, \cite{Fomin}, \cite{Fordy}, \cite{Lam}, where $L$ is assumed to be a (Laurent) polynomial.
\begin{definition} \label{89765}
Equation (\ref{887777}) is said to have the Laurent property if $t_n\in\mathbb{Z}[t_0^{\pm 1},\ldots, t_{N-1}^{\pm 1}], \forall n\in\mathbb{Z}$.
\end{definition}
The requirement for nonlinear  equation formulated by this definition  seems extremely strict, and one is entitled to ask whether such equations even exist, but as it turns out, they do not just exist -- there are actually quite a lot of such recurrences. A classic example of such equations represents the class of Gale-Robinson recurrences which are written in the form
\begin{equation}
t_{n}t_{n+N}=\alpha t_{n+a}t_{n+N-a}+\beta t_{n+c}t_{n+N-c},
\label{887}
\end{equation}
where $N\geq 4$ and $1\leq a<c\leq \left\lfloor N/2 \right\rfloor$. Note that $\alpha$ and $\beta$, in (\ref{887}) are assumed to be arbitrary parameters, but in applications, as a rule, they are equal to one.

It is clear that there is something behind the Laurent property. It is currently known that this property is linked to two key developments: the theory of cluster algebras by Fomin and Zelevinsky \cite{Fomin} and the broader framework of Laurent phenomenon algebras introduced by Lam and Pylyavskyy \cite{Lam}.
In particular, Gale-Robinson recurrences provide a concrete example of a dynamical system that is fully described in the language of cluster algebras: it arises from quiver mutations, its Laurent property follows from general theorems of cluster algebras. Fomin and Zelevinsky, in their seminal work \cite{Fomin}, used precisely the Gale-Robinson recurrence (\ref{887}) as one of the main examples to demonstrate and prove this property within the framework of their  theory.  Thus, cluster algebras provide a universal explanation for the \textit{mysterious} Laurent property for this and similar classes of recurrences.

If the Laurent property is already known for some nonlinear recurrence, then an elementary consequence is the integrality of the terms of the sequence defined by this recurrence under certain conditions. Indeed, if we put $t_j=\pm 1$ for $j=0,\ldots, N-1$, then we can be sure of the integerness of the corresponding sequence. In some cases, nonlinear recurrences generate integer sequences with a sufficiently large number of applications. Nonlinearity here is not an obstacle, but rather a source of rich structure; integrality is not a coincidence, but a manifestation of deep algebraic and geometric laws.

Allow us to delve a little deeper into the history. In the early  1980s, Michael Somos discovered several nonlinear recurrence relations which, despite their nonlinearity, generate integer sequences for suitable initial data. One of them is the equation
\begin{equation}
t_nt_{n+5}=\alpha t_{n+1}t_{n+4}+\beta t_{n+2}t_{n+3}
\label{889987}
\end{equation}
known in the literature as Somos-5. It is clear that (\ref{889987}) is a special case of the Gale-Robinson recurrence (\ref{887}) for $\left(N, a, c\right)=\left(5, 1, 2\right)$.

The classical integer sequence generated by (\ref{889987}) and   listed in the OEIS as \textrm{A006721}, is defined by the initial values $t_j=1$ for $j=0,\ldots, 4$ and $\left(\alpha, \beta\right)=(1, 1)$. This  sequence surprisingly combines combinatorial, arithmetic, and geometric properties about which we would like to say a few words. 
For example, in \cite{Davis}, one can find the following lemma, which relates \textrm{A006721}  to the elliptic curve $E: y^2+xy=x\left(x-1\right)\left(x+2\right)$.  
\begin{lemma}
The relation between \textrm{A006721} and the elliptic curve $E$ is determined by the relation
\[
Q+nP=\left(\frac{t_{n+2}^2-t_nt_{n+4}}{t_{n+2}^2}, \frac{4 t_{n}t_{n+2}t_{n+4}-t_n^2t_{n+6}-t^3_{n+2}}{t_{n+2}^3}\right),
\]
where a point $P=(2, 2)$ of infinite order and a point $Q=(0, 0)$ of order two on the curve.
\end{lemma}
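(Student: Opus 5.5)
Write $R_n=Q+nP$ and let $(x_n,y_n)$ denote the right-hand side of the formula. The plan is to prove $R_n=(x_n,y_n)$ for all $n\in\mathbb{Z}$ by checking the two smallest cases directly and then using an addition-law recursion. Since the sequence \textrm{A006721} is defined for all integers $n$ by running the recurrence (\ref{889987}) in both directions with $\alpha=\beta=1$, it is enough to prove two things: the identity for $n=0$ and $n=1$, and a recursion in $n$ satisfied by both sides.

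\textbf{Initial cases.} Compute the first terms, $t_0=\dots=t_4=1$, $t_5=2$, $t_6=3$, $t_7=5$, and note $t_{-1}=1$ by symmetry.
\begin{itemize}
\item For $n=0$ the formula gives $x_0=(1-1)/1=0$ and $y_0=(4-3-1)/1=0$, so $(x_0,y_0)=Q$.
\item For $n=1$ it gives $x_1=(1-2)/1=-1$ and $y_1=(4\cdot 2-5-1)=2$. This is $Q+P$, which we compute directly from the chord through $(0,0)$ and $(2,2)$ on $E$: the line $y=x$ meets $E$ where $x^2+x^2=x(x-1)(x+2)$, whose third root is $x=-1$; negating with $-(x,y)=(x,-y-x)$ turns $(-1,-1)$ into $(-1,2)$.
\end{itemize}
(One also checks that $P=(2,2)$ and $Q=(0,0)$ lie on $E$ and that $Q$ has order two, since its tangent is vertical.)

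\textbf{Recursion for $x$.} The key structural fact is the standard addition-law relation for a fixed point $P$:
\[
x(R+P)+x(R-P)=F\bigl(x(R)\bigr), \qquad x(R+P)\,x(R-P)=G\bigl(x(R)\bigr),
\]
with $F,G$ rational functions depending only on $E$ and $P$. Equivalently, $x(R+P)$ and $x(R-P)$ are the two roots of a quadratic whose coefficients are determined by $x(R)$. The plan is therefore to show that $x_n$, as defined by the formula, satisfies
\[
x_{n+1}+x_{n-1}=F(x_n),
\]
as an identity valid modulo the Somos-5 recurrence. Together with the two initial values this forces $x_n=x(R_n)$ for all $n$, by induction in both directions; the induction is legitimate because $x_{n+1}$ is determined by $x_n$ and $x_{n-1}$.

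To carry out this verification, substitute $x_n=1-t_nt_{n+4}/t_{n+2}^2$. Then use Somos-5 to express every $t_k$ that appears in terms of five consecutive terms, and clear denominators; the result should be a polynomial identity. I would reduce the bookkeeping with the known invariant of Somos-5 (the quantity conserved along the orbit, which equals a fixed constant for this initial data), or simply with a computer algebra check.

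\textbf{Determining $y$.} Once the $x$-coordinate is established, $y$ is fixed up to the involution $y\mapsto -y-x$. We fix the choice by verifying that $(x_n,y_n)$ lies on $E$, and that the resulting line through $R_n$ and $P$ gives $R_{n+1}$ with the correct sign. Alternatively, we show that the $y$-formula satisfies the analogous addition recursion. The $y$-expression, of degree three in the $t$'s, is exactly what the chord formula $y_{n+1}=-\lambda x_{n+1}-\nu-x_{n+1}$ should produce, with slope $\lambda=(y_n-2)/(x_n-2)$.

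\textbf{Main obstacle.} The hardest step is the Somos-5 reduction in the recursion for $x$. It involves terms from $t_{n-1}$ up to $t_{n+6}$, and the identity holds only on the orbit, not identically. I expect it to require the conserved quantity of Somos-5 evaluated at this particular initial data; matching that constant to the coefficients of $E$ is where the specific curve enters. I would first try to reduce everything to the five variables $t_n,\dots,t_{n+4}$ and compare with the invariant.
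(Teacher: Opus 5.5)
The paper contains no proof of this lemma. It quotes it from \cite{Davis} and credits the connection to Elkies, so your argument has to stand on its own, and it does.

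\textbf{What works.} Your base cases are right: $(x_0,y_0)=(0,0)=Q$ and $(x_1,y_1)=(-1,2)=Q+P$. The relation $x(R+P)+x(R-P)=F(x(R))$ with two initial values pins down $x_n$ for all $n$. The chord identity, read forwards and also backwards for $n<0$, then fixes the sign of $y_n$.

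\textbf{Three small repairs.}
\begin{enumerate}
\item $t_{-1}=2$, not $1$. Indeed $t_{-1}t_4=t_0t_3+t_1t_2=2$, and in general $t_{2-k}=t_{2+k}$. You never use this value, so nothing breaks.
\item The addition relation needs $R_n\neq O,\pm P$, i.e.\ $x_n\neq 2$. This is automatic: all $t_n>0$, so $x_n=1-t_nt_{n+4}/t_{n+2}^2<1$.
\item The statement asserts that $P$ has infinite order, and you do not check it. For example, $4P=(9,24)$. On the integral model $Y^2=X^3+5X^2-32X$ (with $X=4x$, $Y=8y+4x$) this becomes $(36,228)$. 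Since $228\nmid 156672$, the discriminant of the cubic, Nagell--Lutz rules out torsion.
\end{enumerate}

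\textbf{Carrying out the deferred computations.} Your ``main obstacle'' collapses to two-variable algebra if you use the paper's $g=1$ variable $u_n=t_nt_{n+3}/(t_{n+1}t_{n+2})$. It satisfies $u_nu_{n+1}u_{n+2}=1+u_{n+1}$ and has the invariant
\[
H_1=u_n+u_{n+1}+u_n^{-1}+u_{n+1}^{-1}+(u_nu_{n+1})^{-1},
\]
which equals $5$ on \textrm{A006721}.

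One finds $x_n=1-u_nu_{n+1}$. Also
\[
t_n^2t_{n+6}/t_{n+2}^3=u_n^2u_{n+1}^2u_{n+2}u_{n+3}=u_n^2u_{n+1}+u_nu_{n+1}+u_n,
\]
hence $y_n=3u_nu_{n+1}-u_n^2u_{n+1}-u_n-1$.

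Write $u=u_n$, $v=u_{n+1}$, $p=uv=1-x_n$, and set $C=u^2v+uv^2+u+v+1-5uv$, so that $C=0$ is the level set $H_1=5$. On $C=0$ one has $u+v=(5p-1)/(p+1)$, and
\[
x_{n+1}+x_{n-1}=4-\frac{(u+v)(1+u+v)}{uv}=\frac{4x_n^2+14x_n-8}{(x_n-2)^2}.
\]
This is exactly the addition-law value
\[
\frac{\eta^2+36}{2(x_n-2)^2}-\frac52-2(x_n+2),
\]
where $\eta=2y+x$ and $\eta^2=4x^3+5x^2-8x$.

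The on-curve check gives
\[
y_n^2+x_ny_n-x_n(x_n-1)(x_n+2)=u(1+uv)\,C.
\]
For the chord identity, eliminate $u_{n+2}=(1+v)/(uv)$ from $x_{n+1}$ and $y_{n+1}$; then
\[
(y_n-2)(x_{n+1}-2)-(x_n-2)(-y_{n+1}-x_{n+1}-2)=\frac{(u+v+1)\,C}{u}.
\]
So the Somos-5 invariant is indeed where the specific curve enters, as you anticipated. The computation also shows that $(u_n,u_{n+1})\mapsto(x_n,y_n)$ is a birational map from the paper's cubic
\[
(X+Y)XY-H_1XY+\alpha_0+\alpha_1(X+Y)=0
\]
to $E$, under which the shift $n\mapsto n+1$ becomes translation by $P$.
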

Thus, the sequence encodes the complex arithmetic of adding points on the elliptic curve. The connection of this elliptic curve with the sequence \textrm{A006721}, as far as we know, was found by Elkies. 

In the 1990s, Buchholz and Rathbun, in contrast to Schubert's erroneous statement \cite{Dickson}, discovered an infinite family of Heronian triangles with two rational medians \cite{Buchholz}. They noticed that this infinite family is somehow connected to \textrm{A006721} and some associated sequence. Hone went further and provided explicit formulas expressing the sides of the triangles, the two rational medians, and their area directly in terms of the terms of two intertwined Somos-5 sequences \cite{Hone3}. The significance of these works is that they established that  the parameters of the triangles in this infinite family \textit{live} on corresponding elliptic curve. An infinite  sequence of such Heronian triangles correspond exactly to the sequence of points $Q+nP$ on the elliptic curve.


The aim of this work is to present and investigate an infinite family of homogeneous equations of the form 
\begin{equation}
t_nt_{n+2g+3}=\frac{\sum_{j=0}^{g}\alpha_j B^j_{2g-j}(n+1)}{\prod_{j=3}^{2g} t_{n+j}},\; g\geq 1, 
\label{8878886500097}
\end{equation}
which we denote as $R_{2g+3}$.  The coefficients $\alpha_j$ in (\ref{8878886500097}) are supposed to be arbitrary.
On the right-hand side of (\ref{8878886500097}), it is necessary to define the homogeneous discrete polynomials $B^k_s(n)$. In the next section, we will show how these polynomials are successively defined using a certain recurrence relation,  but for now we only note that $B^j_{2g-j}(n)$ is a homogeneous polynomial in the variables $\left(t_n,\ldots, t_{n+2g+1}\right)$. The degree of this polynomial is $2g$. Thus, (\ref{8878886500097}) is supposed to be a homogeneous recurrence relation of degree $2g$ for $g\geq 1$. From what we have said, it follows that (\ref{8878886500097}) is a relation of the form (\ref{887777}), where the right-hand side is a linear combination of certain Laurent polynomials.
Only in one case, namely $g=1$, should the product on the left-hand side of (\ref{8878886500097}) be considered equal to one. In this case, (\ref{8878886500097}) reduces to Somos-5 given by  (\ref{889987}) with $\left(\alpha, \beta\right)=\left(\alpha_1, \alpha_0\right)$.

In what follows, we will show that (\ref{8878886500097}) are homogeneous recurrences of the form (\ref{887777}). For illustration, let us  write explicitly the recurrence relation (\ref{8878886500097}) for the case $g=2$. It has the following form:
\begin{equation}
t_nt_{n+7}=\alpha_0 t_{n+2}t_{n+5}+\alpha_1 \frac{t_{n+1}t_{n+4}^2t_{n+5}+t_{n+2}^2t_{n+5}^2+t_{n+2}t_{n+3}^2t_{n+6}}{t_{n+3}t_{n+4}}+\alpha_2 t_{n+1}t_{n+6}.
\label{6887638}
\end{equation}
It is obvious that this recurrence is of the form (\ref{887777}).   Let us put $n=0$, so that corresponding  $L$ becomes a Laurent polynomial in the variables $\left(t_{1},\ldots, t_{6}\right)$. This polynomial  exhibits a symmetry: $L(t_{1},\ldots, t_{6}) = L(t_{6},\ldots, t_{1})$. In other words, renumbering the arguments in reverse order yields the same polynomial. This example suggest that this symmetry is likely a specific property not only of all recurrences in our family, but also of many other recurrences with the Laurent property found in the literature. Naturally, for this to hold, all the discrete polynomials $B^k_s(n)$ appearing in the definition of the recurrences $R_{2g+3}$ must themselves possess this symmetry. 
We also observe that  under the condition $\alpha_1=0$ this recurrence  reduces to the Gale-Robinson one (\ref{887}) for the case $\left(N, a, c\right)=\left(7, 1, 2\right)$.

The following more general statement holds.
\begin{proposition} \label{6754321}
Under the condition $\alpha_j=0,\; \forall j=1,\ldots, g-1$ the recurrence $R_{2g+3}$ reduces to the Gale-Robinson one (\ref{887}) of type $\left(N, a, c\right)=\left(2g+3, 1, 2\right)$. 
\end{proposition}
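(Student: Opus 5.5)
With $\alpha_1=\dots=\alpha_{g-1}=0$, the right-hand side of $R_{2g+3}$ keeps only the terms $j=0$ and $j=g$. So the whole statement reduces to two closed-form identities for the extreme polynomials:
\begin{equation*}
B^0_{2g}(n)=t_{n+1}t_{n+2g}\prod_{j=2}^{2g-1}t_{n+j},\qquad
B^g_{g}(n)=t_{n}t_{n+2g+1}\prod_{j=2}^{2g-1}t_{n+j},
\end{equation*}
or the same identities with the two roles interchanged. The symmetry $L(t_1,\dots,t_{2g})=L(t_{2g},\dots,t_1)$ noted above is consistent with both forms. Substituting $n+1$ for $n$, the product $\prod_{j=2}^{2g-1}t_{n+1+j}=\prod_{j=3}^{2g}t_{n+j}$ cancels exactly against the denominator of (\ref{8878886500097}). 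This leaves
\[
t_nt_{n+2g+3}=\alpha_0t_{n+2}t_{n+2g+1}+\alpha_g t_{n+1}t_{n+2g+2},
\]
which is (\ref{887}) with $N=2g+3$, $a=1$, $c=2$ and $(\alpha,\beta)=(\alpha_g,\alpha_0)$. For $g=2$ this matches (\ref{6887638}): $B^0_4(n+1)=t_{n+2}t_{n+3}t_{n+4}t_{n+5}$ and $B^2_2(n+1)=t_{n+1}t_{n+3}t_{n+4}t_{n+6}$.

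To prove the two identities I will use the recurrence of the next section that defines $B^k_s(n)$ successively in $k$ and $s$. I expect $B^0_s(n)$ to be given essentially by a monomial seed: a product of consecutive $t$'s, or a shifted quotient such as $t_{n+1}\cdots t_{n+s+?}$. If so, the $j=0$ identity should follow by a direct induction on $s$, checking that each step of the defining recurrence multiplies by the next consecutive $t$.

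The edge $k=s$ is the harder case. I would first compute $B^1_1$ and $B^2_2$ explicitly to fix the pattern. Then I would prove by induction on $k$ that
\[
B^k_k(n)=t_nt_{n+2k+1}\prod_{j=2}^{2k-1}t_{n+j}.
\]
The inductive step must use the defining recurrence at the diagonal entry $(k,k)$, which presumably involves $B^{k-1}_{k}$ and/or $B^{k}_{k-1}$ with shifted arguments. For those terms the main obstacle is the intermediate polynomials $B^{k-1}_{k}$, which are not monomials, as the $\alpha_1$ term for $g=2$ shows. The hope is that the recurrence, restricted to the boundary $k=s$, only involves boundary entries $B^{k-1}_{k-1}$ together with monomial factors. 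If instead it involves $B^{k-1}_{k}$, I would strengthen the induction: first find a closed form for the near-diagonal polynomials, or use a telescoping or determinantal representation such as a discrete Wronskian-type formula. In that representation the diagonal entry reduces to a $2\times2$ minor, which gives the monomial $t_nt_{n+2k+1}$ times the common product.

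Finally, I would confirm the case $g=1$ separately. There the empty product convention turns the two identities into $B^0_2(n+1)=t_{n+2}t_{n+3}$ and $B^1_1(n+1)=t_{n+1}t_{n+4}$, which recovers Somos-5 as stated earlier.
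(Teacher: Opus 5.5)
Your proof is correct and follows the paper's argument. With $\alpha_1=\dots=\alpha_{g-1}=0$ only $B^0_{2g}(n+1)=\prod_{j=3}^{2g}t_{n+j}\cdot t_{n+2}t_{n+2g+1}$ and $B^g_g(n+1)=\prod_{j=3}^{2g}t_{n+j}\cdot t_{n+1}t_{n+2g+2}$ survive, and cancelling the common product gives the Gale--Robinson recurrence with $(\alpha,\beta)=(\alpha_g,\alpha_0)$.

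The one difference is that the paper needs no induction or fallback plan for these closed forms. $B^0_s(n)=\prod_{i=1}^s t_{n+i}$ is the seed, coming from $T^0_s=1$. $B^k_k(n)=\prod_{j=0}^{k-1}t_{n+2j}t_{n+2j+3}$ follows at once from $T^k_k(n)=\prod_{j=0}^{k-1}u_{n+2j}$, because the denominators $t_{n+2j+1}t_{n+2j+2}$ exactly cancel $\prod_{i=1}^{2k}t_{n+i}$. Your diagonal induction would also work: $B^k_{k-1}=0$ reduces (\ref{654443287}) to $B^k_k(n)=t_{n+2k-2}t_{n+2k+1}B^{k-1}_{k-1}(n)$, so the near-diagonal and Wronskian-type contingencies are unnecessary.
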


Let us also consider the following question: what is the number of Laurent monomials occurring in  $R_{2g+3}$? The answer to this question is quite interesting and given by the following statement.
\begin{proposition}   \label{00098767}
The number of monomials in relation (\ref{8878886500097}) is equal to $F_{2g+1}+1$, where $F_k$ denotes the $k$-th Fibonacci number.
\end{proposition}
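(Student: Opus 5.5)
The plan is to reduce the count to the classical identity
\[
F_{2g+1}=\sum_{j=0}^{g}\binom{2g-j}{j},
\]
the sum of a shallow diagonal of Pascal's triangle, and to add $1$ for the left-hand monomial $t_nt_{n+2g+3}$. This guess comes from the two explicit cases. For $g=1$, Somos-5 (\ref{889987}) has $2=F_3$ monomials on the right. For $g=2$, (\ref{6887638}) has $1+3+1=5=F_5$ monomials on the right, with the three blocks matching $\binom{4}{0},\binom{3}{1},\binom{2}{2}$. I would therefore prove the following three facts and then combine them.

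\textbf{(i) Size of each $B^k_s$.} The polynomial $B^k_s(n)$ is a sum of exactly $\binom{s}{k}$ distinct monomials, each with coefficient $1$. I would prove this by induction on $s$ using the defining recurrence of $B^k_s(n)$ given in the next section. I expect that recurrence to be Pascal-like, with $B^k_s(n)$ built from $B^k_{s-1}$ and $B^{k-1}_{s-1}$ at shifted arguments, each multiplied by monomials in the $t$'s. Since all coefficients are nonnegative, no cancellation can occur. The only thing to check is that the two families produced at step $s$ are disjoint, and likewise their members among themselves. For this I would attach to each monomial an invariant that the recurrence changes in a controlled way. Candidates are the exponent of the extreme variable $t_{n+s+1}$, or the weighted index sum $\sum_i i\,e_i$ of the exponent vector. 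Both branches of the recursion should then land in distinguishable classes.

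\textbf{(ii) No collisions between different $j$.} After dividing by $\prod_{j=3}^{2g}t_{n+j}$, the monomials coming from different summands $\alpha_jB^j_{2g-j}(n+1)$ must be distinct Laurent monomials. This holds automatically if the coefficients $\alpha_j$ are treated as independent parameters, since the monomials are then counted in $t$ and $\alpha$ jointly. If instead one wants distinctness in the $t$-variables alone, I would use the same invariant as in (i). Its value should depend on $j$ monotonically; in the $g=2$ example the outer variables $t_{n+1},t_{n+6}$ occur in a way that separates the blocks. Similarly, the left-hand monomial $t_nt_{n+2g+3}$ involves $t_n$ and $t_{n+2g+3}$, which never occur on the right, so it is new. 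This accounts for the extra $+1$.

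\textbf{(iii) Fibonacci identity.} By (i) and (ii), the right-hand side has $\sum_j\binom{2g-j}{j}$ monomials. The identity above follows by induction from $F_{m+1}=F_m+F_{m-1}$ together with Pascal's rule, which splits $\binom{m-j}{j}$ into $\binom{m-1-j}{j}+\binom{m-1-j}{j-1}$.

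The main obstacle is step (i), specifically the disjointness of the two branches of the recursion. This depends on the precise form of the defining recurrence for $B^k_s$, which appears only in the next section. If the invariant approach fails, I would try a bijection: encode each monomial of $B^k_s$ as a $k$-subset of $\{1,\dots,s\}$, or equivalently a lattice path, and read the exponent vector off the path. Injectivity of this encoding would then give the count $\binom{s}{k}$ directly.
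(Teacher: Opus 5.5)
Your plan is correct and is essentially the paper's proof: count $\binom{2g-j}{j}$ monomials in the $j$-th block, sum with $F_{2g+1}=\sum_{j=0}^g\binom{2g-j}{j}$, and add one for the left-hand monomial. The only difference is that the paper counts index tuples $\#D_{k,s}=\binom{s}{k}$ for $T^k_s$ using the hockey-stick recursion (\ref{676678}), where you use Pascal's rule on $B^k_s$; the paper also never addresses distinctness of monomials, which you rightly flag. For that step, use the exponent of $t_{n+k+s+1}$ (not $t_{n+s+1}$): it is $0$ on the branch $t_{n+k+s}B^k_{s-1}(n)$ of (\ref{654443287}) and $1$ on $t_{n+k+s-2}t_{n+k+s+1}B^{k-1}_{s-1}(n)$. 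Drop $\sum_i i\,e_i$, since it takes the same value on every monomial ($u_m$ is invariant under $t_i\mapsto AB^it_i$), and note that neither quantity separates different $j$; independence of the $\alpha_j$ does, and so does the weight $\sum_i i^2e_i$, because each $u_m$ contributes exactly $4$, making it a constant plus $4j$ on the $j$-th block.
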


Let us formulate another property of recurrences (\ref{8878886500097}), which is a generalization of our observation for the case $g=2$ (and for $g=1$, of course).
\begin{proposition}   \label{004432767}
Setting $n=0$ in (\ref{8878886500097}), the right-hand side of this relation admits the symmetry $L(t_{1},\ldots, t_{2g+2}) = L(t_{2g+2},\ldots, t_{1})$.
\end{proposition}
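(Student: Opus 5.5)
We have not yet seen the definition of the polynomials $B^k_s(n)$; the next section will define them recursively. The plan is to prove the reversal symmetry for every $B^k_s$, and then deduce the statement for the right-hand side of (\ref{8878886500097}).

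\textbf{Step 1: symmetry of each $B^k_s$.} Write $B^k_s(n)$ as a polynomial in the window $(t_n,\ldots,t_{n+k+s+1})$; for $k+s=2g$ this is the window $(t_n,\ldots,t_{n+2g+1})$ mentioned in the introduction. Let $\sigma$ denote the reversal $t_{n+i}\mapsto t_{n+m-i}$ of a window of length $m+1$. The first task is to prove, by induction along the defining recurrence, that $\sigma B^k_s(n)=B^k_s(n)$.

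The recurrence presumably expresses $B^k_s$ through $B^{k-1}_{s}$, $B^{k}_{s-1}$ and shifted copies, with monomial factors attached to the ends of the window. If it is written asymmetrically (say, by appending variables on the right), I would first establish a mirrored version of the same recurrence, appending on the left instead. The argument would be:
\begin{itemize}
\item Show that both the left- and right-handed recurrences generate the same polynomials.
\item Apply $\sigma$ to the right-handed recurrence. It becomes the left-handed one.
\item Use the induction hypothesis on lower $(k,s)$ to conclude $\sigma B^k_s=B^k_s$.
\end{itemize}

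\textbf{Step 2: the right-hand side.} At $n=0$ the numerator is $\sum_j\alpha_jB^j_{2g-j}(1)$, a polynomial in $(t_1,\ldots,t_{2g+2})$. By Step 1, each term is invariant under $t_i\mapsto t_{2g+3-i}$. The denominator $\prod_{j=3}^{2g}t_j$ is mapped by the same reversal to $\prod_{j=3}^{2g}t_{2g+3-j}=\prod_{j'=3}^{2g}t_{j'}$, so it is invariant too. Hence $L$ is invariant, and the statement follows termwise for arbitrary $\alpha_j$.

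\textbf{Main obstacle.} I expect the hard part to be the two-sided recurrence in Step 1. I would try two routes:
\begin{itemize}
\item \emph{Combinatorial route.} Describe $B^k_s$ explicitly as a sum over certain lattice paths or index sequences. The Fibonacci count in Proposition \ref{00098767} suggests tilings or compositions with parts $1,2$. Then exhibit the reversal as a bijection on these index sets.
\item \emph{Generating-function route (fallback).} If the polynomials come from a generating function or a Lax/Mumford-type matrix product, the reversal corresponds to transposition or reversing the product order. Invariance would then follow from the invariance of a trace or determinant.
\end{itemize}
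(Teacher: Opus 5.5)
Your plan is correct and is essentially the paper's proof: the paper applies the reversal to the right-appending recurrence (\ref{654443287}) and uses the induction hypothesis on $B^k_{s-1}$ and $B^{k-1}_{s-1}$. It then recognizes the result as the left-appending recurrence (\ref{65445554337}), with base cases $B^0_1=t_1$ and $B^1_1=t_0t_3$. Both recurrences come from Lemma \ref{65439876}, i.e.\ from two ways of partitioning $D_{k,s}$, which is the ``mirrored recurrence'' you flagged as the main obstacle. Your Step 2, including the check that $\prod_{j=3}^{2g}t_j$ is reversal-invariant, spells out the deduction the paper leaves implicit when it says the lemma proves the proposition.
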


Also we will show the following simple property for the $R_{2g+3}$.
\begin{proposition}   \label{00444327}
Given a pair of nonzero constants $A$ and $B$ and any solution of the $R_{2g+3}$. The transformation  $t_n\mapsto AB^n t_n$ gives another solution to this recurrence.
\end{proposition}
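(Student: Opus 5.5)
The claim is a degree-counting statement, so I would prove it by checking two scaling properties separately. Write the recurrence $R_{2g+3}$ as
\[
t_nt_{n+2g+3}\prod_{j=3}^{2g}t_{n+j}=\sum_{j=0}^{g}\alpha_jB^j_{2g-j}(n+1).
\]
The substitution $t_n\mapsto AB^nt_n$ is the composite of a constant scaling $t_n\mapsto At_n$ and a gauge scaling $t_n\mapsto B^nt_n$. It therefore suffices to show that both sides acquire the same factor under each scaling, with a factor depending on $n$ but not on $j$.

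\emph{Constant scaling.} The left-hand side is a monomial of degree $2+(2g-2)=2g$. Each $B^j_{2g-j}(n+1)$ is a homogeneous polynomial of degree $2g$ in $t_{n+1},\ldots,t_{n+2g+2}$. Hence both sides are multiplied by $A^{2g}$, and the recurrence is preserved. The same count covers $g=1$, where the product is empty, both sides have degree $2$, and the equation is Somos-5.

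\emph{Gauge scaling.} Under $t_m\mapsto B^mt_m$, a monomial $\prod t_{m}^{e_m}$ is multiplied by $B^{\sum me_m}$, so the relevant invariant is the weighted index sum $w=\sum me_m$. For the left-hand side,
\[
w=n+(n+2g+3)+\sum_{j=3}^{2g}(n+j)=2gn+(2g+3)+\tfrac{(2g-2)(2g+3)}{2}=2gn+g(2g+3).
\]
The right-hand side involves $t_{n+1},\ldots,t_{n+2g+2}$, which have centre of mass $n+g+\tfrac32$. A degree-$2g$ monomial whose weighted sum equals $2g$ times this centre has $w=2gn+g(2g+3)$, matching the left-hand side. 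So the key lemma is the following: every monomial of $B^j_{2g-j}(n+1)$ has weighted index sum $2g(n+g+\tfrac32)$. Equivalently, every monomial of $B^j_{2g-j}(n)$ is \emph{balanced} about $n+g+\tfrac12$.

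I expect proving this lemma to be the main obstacle, because it depends on the recursive definition of $B^k_s(n)$ given in the next section. I would prove it by induction along that recurrence. I would show that each $B^k_s(n)$ is homogeneous both in ordinary degree and in the weight $\sum me_m$, with a weight that is an explicit affine function of $n$, $k$ and $s$. Then I would check that each step of the defining recurrence preserves this double grading.

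A possibly cleaner route to the lemma uses Proposition~\ref{004432767}. The reflection $t_{i}\mapsto t_{2g+3-i}$ (at $n=0$) sends a monomial of weight $w$ and degree $2g$ to one of weight $2g(2g+3)-w$. Reflection symmetry alone does not force each monomial to sit at the midpoint. However, once weight-homogeneity of each $B^j_{2g-j}$ is established, the symmetry pins the common weight to the midpoint $g(2g+3)$. So I would establish weight-homogeneity inductively and then fix its value either by Proposition~\ref{004432767} or by direct computation. As a check, the terms $t_{n+2}t_{n+5}$, $t_{n+1}t_{n+6}$ and $t_{n+2}^2t_{n+5}^2/(t_{n+3}t_{n+4})$ in (\ref{6887638}) all have weight $2n+7$ per factor pair.

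With both scalings verified, $AB^nt_n$ satisfies $R_{2g+3}$ whenever $t_n$ does.
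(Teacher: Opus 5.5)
Your argument is correct, but it takes a more laborious route than the paper.

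\textbf{The paper's argument.} The paper settles this proposition in one remark, ``by construction.'' $R_{2g+3}$ is just the associated recurrence (\ref{6766438}) after the substitution $u_n=t_nt_{n+3}/(t_{n+1}t_{n+2})$ and multiplication by the monomial $\prod_{j=2}^{2g+1}t_{n+j}$. Moreover $u_n$ is invariant under $t_n\mapsto AB^nt_n$, since the factors $A^2B^{2n+3}$ cancel; equivalently, $u_n=w_nw_{n+1}$ with Hone's gauge-invariant $w_n$. So the $u$-form of the equation is untouched, and the polynomial form is only multiplied by a nonzero constant.

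\textbf{Your argument.} You verify the symmetry directly on the polynomial form using a double grading: ordinary degree and the weight $\sum me_m$. Your key lemma is true, and the induction you propose along (\ref{654443287}) does close. With $m=k+s$, both $t_{n+m}B^k_{s-1}(n)$ and $t_{n+m-2}t_{n+m+1}B^{k-1}_{s-1}(n)$ have weight $mn+m(m+1)/2$. The base cases $B^0_s(n)$ and $B^k_k(n)$ have this weight as well.

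\textbf{A shortcut for the lemma.} The lemma is in fact immediate from the definition (\ref{65400000987}). Each $u_m$ is a Laurent monomial of degree $0$ and weight $m+(m+3)-(m+1)-(m+2)=0$. Hence every monomial of $B^k_s(n)=T^k_s(n)\prod_{j=1}^{k+s}t_{n+j}$ has the degree and weight of $\prod_{j=1}^{k+s}t_{n+j}$. This is precisely the paper's gauge-invariance observation in disguise, so neither the induction nor the detour through Proposition~\ref{004432767} is needed.

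\textbf{What each approach buys.} Your version is a self-contained check that works from the recursive description of $B^k_s$ alone, without the $u$-variables. It also exhibits the explicit common factor $A^{2g}B^{2gn+g(2g+3)}$ acquired by both sides. The paper's version is shorter and explains conceptually why the symmetry exists: $R_{2g+3}$ is a Laurentification of an equation written in gauge-invariant variables.
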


Let us now formulate the main property of the recurrences $R_{2g+3}$ that we are going to prove.
\begin{theorem}  \label{76888766}
Any recurrence $R_{2g+3}$ has the Laurent property. More precisely, for all $n\in\mathbb{Z}$, $t_n$ belongs to the ring $\mathbb{Z}[t_0^{\pm 1},\ldots, t_{2g+2}^{\pm 1}; \alpha_0,\ldots,\alpha_g]$.
\end{theorem}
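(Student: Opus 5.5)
Since the recurrence is nonlinear with a non-monomial denominator $\prod_{j=3}^{2g}t_{n+j}$, I will not attempt a direct cluster-algebra embedding. Instead I plan the standard route for Somos-type recurrences: \emph{linearization via a conserved quantity plus a coprimality argument}. The mechanism I expect behind $R_{2g+3}$ (given the keyword ``discrete Mumford dynamical system'') is that the polynomials $B^k_s(n)$ arise as entries or minors of a Lax-type structure. Along solutions, $\sum_j \alpha_j B^j_{2g-j}(n+1)$ combined with $t_nt_{n+2g+3}$ should organise into quantities that are shift-invariant up to simple monomial factors. The first step is therefore to establish, from the recursive definition of the $B^k_s$, an identity expressing $t_n t_{n+2g+3}\prod_{j=3}^{2g}t_{n+j}$ as a determinant or bilinear expression. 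Its invariance under $n\mapsto n+1$ gives first integrals $I_0,\dots,I_g$, which are Laurent polynomials in $t_0,\dots,t_{2g+2}$. The scaling symmetry of Proposition \ref{00444327} serves as a consistency check: the integrals must be invariant under $t_n\mapsto AB^nt_n$.

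The second step is the actual Laurent argument, done by the Fomin--Zelevinsky ``caterpillar'' / Hone-style induction on $n$. By the reversal symmetry of Proposition \ref{004432767} the recurrence can be run backwards in the same form, so it suffices to treat $n\ge 2g+3$. Assume $t_0,\dots,t_{m-1}$ lie in $\mathcal{L}=\mathbb{Z}[t_0^{\pm1},\dots,t_{2g+2}^{\pm1};\alpha]$. Then $t_m$ is a priori in $\mathcal{L}$ localized at $t_{m-2g-3}$ and at the intermediate $t_{m-2g+\ldots}$ appearing in the denominator product.

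The key lemma to prove is a coprimality statement. The terms $t_k$ for $k\ge 2g+3$ should be irreducible in the UFD $\mathcal{L}$, and pairwise coprime with $t_{k'}$ whenever $0<|k-k'|\le 2g+2$ (or within the relevant window). I plan to prove this by specialization. Substituting particular values for some initial variables, such as setting the $\alpha_j$ appropriately, and using Proposition \ref{6754321} to reduce to Gale--Robinson with $(N,a,c)=(2g+3,1,2)$, where Laurentness and coprimality are known from cluster theory, shows that the terms stay nonzero and non-associated. Given coprimality, the standard trick applies: compute $t_m$ in two ways. The first is via the recurrence at step $m-2g-3$. The second is via the recurrence with all initial data shifted by one, i.e. treating $t_1,\dots,t_{2g+3}$ as initial variables, which by induction and the substitution $t_{2g+3}=(\ldots)/t_0$ yields denominators only in $t_0$. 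The two expressions have coprime denominators in $\mathcal{L}$, so $t_m\in\mathcal{L}$.

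The main obstacle I expect is the denominator $\prod_{j=3}^{2g}t_{n+j}$. Unlike the Somos case, one must check that the numerator $\sum_j\alpha_jB^j_{2g-j}(n+1)$ is divisible by each $t_{n+j}$, $3\le j\le 2g$, once the earlier terms are expressed via the recurrence. This is exactly where the first integrals of the first step should enter. I would first try to show that the numerator equals $t_{n+3}\cdots t_{n+2g}$ times a combination of $t_{n+1}t_{n+2g+2}$, $t_{n+2}t_{n+2g+1}$ and a Laurent expression, modulo the conserved quantities. This mirrors how Somos-4/5 Laurentness is reproved via their elliptic first integrals. If this fails, the fallback is to embed $R_{2g+3}$ into a Laurent phenomenon algebra in the sense of Lam--Pylyavskyy, with the exchange polynomial given by the symmetric numerator of Proposition \ref{004432767}, and verify the LP exchange conditions directly.
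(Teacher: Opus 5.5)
\textbf{There is a genuine gap.} It sits exactly where you place your ``main obstacle'': nothing in the proposal handles the non-monomial denominators $t_{m-2g},\dots,t_{m-3}$ coming from $\prod_{j=3}^{2g}t_{n+j}$, and the caterpillar argument as you sketch it cannot absorb them.

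From the recurrence you only get $t_m\in\mathcal{L}[(t_{m-2g-3}\,t_{m-2g}\cdots t_{m-3})^{-1}]$. The shifted-window step does not give denominators ``only in $t_0$''. A Laurent polynomial in $t_1,\dots,t_{2g+3}$ may contain negative powers of $t_{2g+3}$. Substituting $t_{2g+3}=\sum_j\alpha_jB^j_{2g-j}(1)/(t_0t_3\cdots t_{2g})$ then puts powers of that non-monomial numerator into the denominator. So all you get is $t_m\in\mathcal{L}[t_{2g+3}^{-1}]$. For $g\ge 2$ and $2g+6\le m\le 4g+3$, $t_{2g+3}$ is itself one of the middle denominators $t_{m-2g},\dots,t_{m-3}$. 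The two localizations then share a non-unit factor, and intersecting them proves nothing.

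The specialization step does not rescue this, for two reasons:
\begin{itemize}
\item Coprimality after specialization does not lift, since a common factor can specialize to a unit. Also, ``non-associated'' is weaker than coprime.
\item The specialization $\alpha_1=\dots=\alpha_{g-1}=0$ of Proposition~\ref{6754321} deletes precisely the terms carrying $\prod_{j=3}^{2g}t_{n+j}$. It therefore gives no information about the divisibility you need.
\end{itemize}
That divisibility ``modulo the conserved quantities'' is left as a hope, with an unverified Lam--Pylyavskyy fallback. It is essentially the whole content of the theorem.

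\textbf{The paper needs no coprimality at all.} Via $u_n=t_nt_{n+3}/(t_{n+1}t_{n+2})$ it passes to the associated recurrence, which has a Mumford-type Lax pair $L_nM_n=M_nL_{n+1}$ (Theorem~\ref{8975465}). Then $(y+P_0)/Q_0$, with $y^2=P_0^2+Q_0R_0$, equals the Stieltjes fraction $St_0(x)=1+u_0x/(1+u_1x/(1+\cdots))$. Hence $C=St_0-1$ satisfies $QC^2+2(Q-P)C-2P+Q-R=0$.

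After dividing by $Q(0)=2$, this is a recurrence for the Taylor coefficients $s_m$ of $St_0$. Its coefficients are $q_k/2$ and $p_k-q_k$, which lie in $\mathcal{L}_{2g+3}$ by Lemma~\ref{6754329876}, and the initial data $s_1,\dots,s_{g+1}$ are Laurent. So all $s_m$, and hence all Hankel determinants $\Delta_n$, lie in $\mathcal{L}_{2g+3}$. The S-fraction identities of Lemma~\ref{66654320987} then give $t_n$ as a Laurent monomial times $\Delta_{n-2}$ for $n\ge 0$, and the reversal symmetry of Proposition~\ref{004432767} covers $n<0$.

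Your instinct that a Lax structure underlies the Laurent property is right. But the working mechanism is this Hankel-determinant representation, not first integrals feeding a coprimality induction.
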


The plan of the article is as follows. In the following section, we will write down and examine  recurrence that is directly related to (\ref{8878886500097}). It will be defined using some discrete polynomials $T^k_s(n)$.  We will show how our  recurrence $R_{2g+3}$ under study is related to this associated recurrence by means of a certain substitution.  The point of dealing with the associated recurrence, rather than with (\ref{8878886500097}) directly, is that it  is easier to work with.  In Section \ref{765439876}, we construct a Lax representation for associated recurrence. This representation has its own specific features. In particular, it is closely related to a certain continued fraction $St_n(x)$. Essentially we need this Lax representation to prepare the ground for the proof of Theorem \ref{76888766}.  Section \ref{56432908} is essentially devoted to the proof of Theorem \ref{76888766}, but not only that.
There we establish a connection between (\ref{8878886500097}) and the discrete Mumford dynamical system  introduced in \cite{Hone1}. Short Section \ref{7654098756} provides background information about integer sequences generated by  (\ref{8878886500097}) in a some particular case. Finally, Section \ref{786453209} is devoted to remarks and conjectures.

\begin{remark}
We would like to use the two terms discrete equation and recurrence relation together  and consider them as synonyms. We also consider the terms sequence and solution of a discrete equation as synonymous.
\end{remark}

\section{Associated recurrence}

\subsection{Definition}

The first thing we want to do is to fully define recurrences of the form (\ref{8878886500097}).
To this aim, for any integer $g \geq 1$, we define the associated recurrence 
\begin{equation}
\prod_{j=0}^{2g} u_{n+j}=\sum_{j=0}^g \alpha_j T^j_{2g-j}(n+1).
\label{6766438}
\end{equation}
Here it is assumed that $T^k_s(n)$, for $k, s \geq 1$, is some discrete homogeneous polynomial to be determined. We should immediately explain why we have written this relation here.
We will show that by using the substitution
\begin{equation}
u_n=\frac{t_n t_{n+3}}{t_{n+1}t_{n+2}}
\label{676678888768}
\end{equation}
in (\ref{6766438}), we can reduce it to  (\ref{8878886500097}).   
\begin{remark}
Some comments are in order. Note that here we are working backwards, so to speak, compared to \cite{Hone2}. Having noted the property of the Somos-5  mentioned in Proposition 1, Hone goes on to say that it is natural to define \textit{gauge-invariant} combination  $w_n=t_nt_{n+2}/t^2_{n+1}$ and work with the recurrence for the sequence $\left(w_n\right)_{n\in \mathbb{Z}}$. Then, he constructs an invariant and a general solution for this recurrence. Note that one has $u_n=w_nw_{n+1}$, in (\ref{676678888768}).  Note that Proposition \ref{00444327} can be regarded as proven simply by the construction of our recurrences (\ref{8878886500097}). Equation (\ref{6766438}) is, as we intend, a consequence of $R_{2g+3}$ for any $g\geq 1$. If Theorem \ref{76888766} holds, then one can say that the substitution (\ref{676678888768}) yields a \textit{Laurentification} of  (\ref{6766438}).
\end{remark}

\subsection{Discrete polynomials $T^k_s(n)$}

In what follows, we largely follow the works \cite{Svinin3} and \cite{Svinin4} . 
From a practical point of view, a good definition of $T^k_s(n)$ is given by the recurrence relation:
\begin{equation}
T^k_s(n)=\sum_{j=0}^{s-k} u_{n+j} T^{k-1}_{s-j-1}(n+j+2),\; \forall s\geq k.
\label{676678}
\end{equation}
For example, one can start with $T^0_s(n)=1$ and then step by step compute the polynomials $T^k_s(n)$ of increasingly higher degree $k$.

We can also write $T^k_s(n)$ in explicit form, namely as
\begin{equation}
T^k_s(n)=\sum_{0\leq \lambda_1<\cdots< \lambda_k\leq s-1} \prod_{j=1}^k u_{n+\lambda_j+j-1}.
\label{67999998}
\end{equation}
Expression (\ref{67999998}) represents a summation over a finite set of admissible values of $\lambda_j$. Let us denote $D_{k, s}=\left\{\lambda_j : 0\leq \lambda_1<\cdots< \lambda_k\leq s-1\right\}$. Note that in the case $s=1,\ldots, k-1$, we have $D_{k, s}=\varnothing$, and consequently $T^k_s(n)=0$ for these values of $s$. In turn, if $s=k$, then the set $D_{k, s}$ consists of a single element $\left\{\lambda_j=j,\;\; j=0,\ldots, k-1\right\}$, and thus we obtain
\begin{equation}
T^k_k(n)=\prod_{j=0}^{k-1}u_{n+2j},\;\; \forall k\geq 1.
\label{6799999000988}
\end{equation}

The drawback of this definition is that (\ref{67999998}) is rather difficult to use in practice compared to (\ref{676678}); nevertheless, as we will see below, some useful facts can be derived from it. Therefore, it is better to use these definitions together. What can we observe by looking at (\ref{67999998})? We see that $T^k_s(n)$ is a homogeneous polynomial in the variables $\left(u_n,\ldots, u_{n+k+s-2}\right)$ of degree $k$. Of course, it is not obvious that (\ref{676678}) and (\ref{67999998}) yield the same infinite class of discrete polynomials; however, as we will show below in the further exposition, we will demonstrate this fact.

\subsection{Discrete polynomials $B^k_s(n)$ and definition of recurrences (\ref{8878886500097})}

The first thing to investigate is how to go from (\ref{6766438}) to  (\ref{8878886500097}). To do this, we need to define the mapping $T^k_s(n)\mapsto B^k_s(n)$. Let us write
\begin{equation}
B^k_s(n)=T^k_s(n)\prod_{j=1}^{k+s} t_{n+j},\;\; \forall s\geq k,\; k\geq 1,
\label{65400000987}
\end{equation}
provided that we have made the substitution (\ref{676678888768}) into $T^k_s(n)$. But we must be sure that $B^k_s(n)$ are indeed polynomials. The following lemma is useful for proving this.
\begin{lemma} \label{65439876} \cite{Svinin4}
The two relations
\begin{eqnarray}
T^k_s(n)&=&T^k_{s-1}(n)+u_{n+s+k-2}T^{k-1}_{s-1}(n) \label{6540987} \\
&=&T^k_{s-1}(n+1)+u_nT^{k-1}_{s-1}(n+2)
\label{67776587}
\end{eqnarray}
are identities.
\end{lemma}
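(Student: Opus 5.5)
We prove both identities from the explicit formula (\ref{67999998}) by splitting the index set $D_{k,s}$ according to the position of an extreme index. Since the paper has not yet shown that (\ref{676678}) and (\ref{67999998}) agree, we take (\ref{67999998}) as the definition. (Alternatively we may proceed by induction on $k$ using (\ref{676678}) with $T^0_s=1$; we comment on that at the end.) Throughout we use the conventions $T^0_s(n)=1$ and $T^k_s(n)=0$ for $s<k$, i.e.\ an empty index set gives zero. With these conventions the boundary cases $s=k$ are consistent with (\ref{6799999000988}).

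For (\ref{6540987}) we split $D_{k,s}$ according to whether $\lambda_k\le s-2$ or $\lambda_k=s-1$.

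The first part is exactly $D_{k,s-1}$. Since the summand $\prod_j u_{n+\lambda_j+j-1}$ does not depend on $s$, this part contributes $T^k_{s-1}(n)$.

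In the second part the last factor is $u_{n+(s-1)+k-1}=u_{n+s+k-2}$. The remaining indices satisfy $0\le\lambda_1<\dots<\lambda_{k-1}\le s-2$, i.e.\ they range over $D_{k-1,s-1}$, and their product is unchanged. So this part contributes $u_{n+s+k-2}T^{k-1}_{s-1}(n)$, which proves (\ref{6540987}).

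For (\ref{67776587}) we split instead according to whether $\lambda_1\ge1$ or $\lambda_1=0$.

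If $\lambda_1\ge1$, set $\mu_j=\lambda_j-1$, so that $0\le\mu_1<\dots<\mu_k\le s-2$. Then
\[
u_{n+\lambda_j+j-1}=u_{(n+1)+\mu_j+j-1},
\]
and this part equals $T^k_{s-1}(n+1)$.

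If $\lambda_1=0$, the first factor is $u_n$. For the others set $\nu_i=\lambda_{i+1}-1$ for $i=1,\dots,k-1$, which gives $0\le\nu_1<\dots<\nu_{k-1}\le s-2$. The index then becomes
\[
n+\lambda_{i+1}+(i+1)-1=n+\nu_i+1+i=(n+2)+\nu_i+i-1,
\]
so this part equals $u_nT^{k-1}_{s-1}(n+2)$. This proves (\ref{67776587}).

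The only real work is the index bookkeeping: checking that the shifted index sets are exactly $D_{k,s-1}$ and $D_{k-1,s-1}$, and that the subscript shifts by $1$ and by $2$ come out correctly. The boundary cases $k=1$ and $s=k$ also have to be checked against the conventions above.

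If one insists on deriving the identities from (\ref{676678}) instead, (\ref{67776587}) is just the $j=0$ term of (\ref{676678}) split off, followed by reindexing the rest as $T^k_{s-1}(n+1)$. Then (\ref{6540987}) follows by induction on $k$, applying the inductive hypothesis to each $T^{k-1}$ in (\ref{676678}). The combinatorial route is cleaner, and as a byproduct it shows that both definitions satisfy the same recursion with the same initial data, hence coincide.
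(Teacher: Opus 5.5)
Your proof is correct and uses the same argument as the paper: each identity comes from splitting $D_{k,s}$ into two disjoint pieces according to an extreme index and then reindexing. The paper writes out only the split by $\lambda_1=0$ versus $\lambda_1\ge 1$ and attributes it to the first identity, but that split actually gives the second one; your pairing of the $\lambda_k$-split with the first identity and the $\lambda_1$-split with the second is the accurate one.
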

\begin{proof}
The relations (\ref{6540987}) and (\ref{67776587}) arise as a result of a suitable partition of the set $D_{k, s}$ into two disjoint sets. For example, one can verify that (\ref{6540987}) corresponds to the partition $D_{k, s}=D^{(1)}_{k, s}\sqcup D^{(2)}_{k, s}$ with
\[
D^{(1)}_{k, s}=\left\{\lambda_j : \lambda_1=0,\;\; 1\leq \lambda_2<\cdots< \lambda_k\leq s-1\right\}
\]
and 
\[
D^{(2)}_{k, s}=\left\{\lambda_j : 1\leq \lambda_1<\cdots< \lambda_k\leq s-1\right\}.
\]
\end{proof}

Note that (\ref{676678}) can be obtained by successive applications of (\ref{67776587}). From this, in turn, follows the equivalence of the two different definitions (\ref{676678}) and (\ref{67999998}) for the polynomials $T^k_s(n)$.

From Lemma \ref{65439876} together with (\ref{65400000987}), we obtain the following statement.
\begin{lemma}  \label{7654876}
Discrete polynomials $B^k_s(n)$ can be computed using one of the following two recurrence relations:
\begin{eqnarray}
B^k_s(n)&=&t_{n+k+s}B^k_{s-1}(n)+t_{n+k+s-2}t_{n+k+s+1}B^{k-1}_{s-1}(n) \label{654443287} \\
&=&t_{n+1}B^k_{s-1}(n+1)+t_nt_{n+3}B^{k-1}_{s-1}(n+2).
\label{65445554337}
\end{eqnarray}
\end{lemma}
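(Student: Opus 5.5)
We multiply the two identities of Lemma \ref{65439876} by the product $\prod_{j=1}^{k+s} t_{n+j}$ and substitute (\ref{676678888768}). The definition (\ref{65400000987}) then turns each term into a $B$-polynomial with a shifted index, times a leftover monomial in the $t$'s. Lemma \ref{7654876} is purely a rewriting of Lemma \ref{65439876}, so the only task is to track the $t$-factors in each term.

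For (\ref{654443287}), start from (\ref{6540987}) and write $P_{k,s}(n)=\prod_{j=1}^{k+s} t_{n+j}$.
\begin{itemize}
\item In the first term, $T^k_{s-1}(n)P_{k,s}(n)=t_{n+k+s}\,B^k_{s-1}(n)$, because $P_{k,s}(n)=t_{n+k+s}P_{k,s-1}(n)$.
\item In the second term, $P_{k,s}(n)=t_{n+k+s-1}t_{n+k+s}P_{k-1,s-1}(n)$. Using $u_{n+k+s-2}=t_{n+k+s-2}t_{n+k+s+1}/(t_{n+k+s-1}t_{n+k+s})$, the two denominator factors cancel exactly.
\item This leaves $t_{n+k+s-2}t_{n+k+s+1}B^{k-1}_{s-1}(n)$, as claimed.
\end{itemize}

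For (\ref{65445554337}), start from (\ref{67776587}).
\begin{itemize}
\item In the first term, $P_{k,s}(n)=t_{n+1}\prod_{j=2}^{k+s}t_{n+j}=t_{n+1}P_{k,s-1}(n+1)$, so it becomes $t_{n+1}B^k_{s-1}(n+1)$.
\item In the second term, $P_{k,s}(n)=t_{n+1}t_{n+2}\prod_{j=3}^{k+s}t_{n+j}=t_{n+1}t_{n+2}P_{k-1,s-1}(n+2)$, since $(n+2)+(k+s-2)=n+k+s$.
\item Multiplying by $u_n=t_nt_{n+3}/(t_{n+1}t_{n+2})$ cancels $t_{n+1}t_{n+2}$ and gives $t_nt_{n+3}B^{k-1}_{s-1}(n+2)$.
\end{itemize}

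Definition (\ref{65400000987}) is stated only for $k\ge1$ and $s\ge k$, so the boundary cases need a convention. Take $B^0_s(n)=P_{0,s}(n)=\prod_{j=1}^{s}t_{n+j}$, consistent with $T^0_s=1$. Take $B^k_{s}(n)=0$ when $s<k$, consistent with $T^k_s=0$ there. The identities then hold for all $s\ge k\ge1$.

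The main obstacle is only this index bookkeeping, namely checking that the products telescope as above. As a consequence, the recursions together with the initial values $B^0_s$ (monomials) show by induction on $k+s$ that every $B^k_s(n)$ is a polynomial with nonnegative integer coefficients in the $t$'s. This is the purpose the lemma serves in the paper.
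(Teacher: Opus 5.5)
Your proposal is correct and takes the same approach as the paper. The paper obtains Lemma \ref{7654876} in one line, by multiplying the identities (\ref{6540987}) and (\ref{67776587}) of Lemma \ref{65439876} by $\prod_{j=1}^{k+s}t_{n+j}$ after the substitution (\ref{676678888768}); your term-by-term tracking of the $t$-factors, with the boundary conventions $B^0_s=\prod_{j=1}^{s}t_{n+j}$ and $B^k_s=0$ for $s<k$, makes that derivation explicit.
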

However, for these recurrence relations to work, additional data are needed. Taking (\ref{6799999000988}) into account, we can write:
\[
B^k_k(n)=\prod_{j=0}^{k-1} t_{n+2j} \prod_{j=0}^{k-1} t_{n+2j+3},\;\; \forall k\geq 1.
\]
Fortunately, this polynomial can be computed without using any recurrence relations. We also need to define $B^0_s(n)=\prod_{j=1}^s t_{n+j}, \forall s\geq 1$. Note that the latter is consistent with (\ref{65400000987}) provided that we set $T^0_s(n)=1, \forall s\geq 1$. Now we have enough data to use relation (\ref{654443287}) or, if desired, relation (\ref{65445554337}) to determine the polynomials $B^k_s(n)$, first for $k=1$, then for $k=2$, and so on.

Obviously, $\deg\left(B^0_s\right)=s$ and $\deg\left(B^k_k\right)=2k$. Moreover, the two terms on the right-hand sides of (\ref{654443287}) and (\ref{65445554337}) have the same degree, equal to $k+s$. All this proves that a discrete polynomials $B^k_s(n)$, computed using (\ref{654443287}) (or (\ref{65445554337})), is indeed a homogeneous one of degree $k+s$ in the variables $\left(t_n,\ldots, t_{n+k+s+1}\right)$.

As a complement to Lemma \ref{7654876}, we prove the following statement, which generalizes our previous observation. For convenience, we take $n=0$, so that a discrete polynomial $B^k_s(n)$ turns into a polynomial $B^k_s(0)$ in the variables $\left(t_0,\ldots, t_{k+s+1}\right)$.      For convenience, let us denote this polynomial simply  by $B^k_s$.   
\begin{lemma}
The polynomial  $B^k_s$ exhibits a symmetry: 
\[
B^k_s\left(t_{k+s+1},\ldots, t_{0}\right)=B^k_s\left(t_0,\ldots, t_{k+s+1}\right),\;\; \forall k\geq 0, s\geq 1,
\]
that is, renumbering the arguments $\left(t_0,\ldots, t_{k+s+1}\right)$ in reverse order yields the same polynomial.
\end{lemma}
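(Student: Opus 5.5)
We will prove the symmetry by induction on $k$, with an inner induction on $s$. The key observation is that the two recurrences of Lemma \ref{7654876} are exchanged by the reversal of variables. This mirrors the two partitions of $D_{k,s}$ behind Lemma \ref{65439876}.

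Let $\sigma$ denote the reversal $t_j\mapsto t_{m-j}$ with $m=k+s+1$. Write $B^k_s[a]$ for the polynomial $B^k_s(a)$, which lives in the variables $t_a,\ldots,t_{a+k+s+1}$. The claim to prove is then $\sigma B^k_s[0]=B^k_s[0]$.

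For the base cases, $B^0_s=\prod_{j=1}^s t_j$ and $\sigma$ maps $\{1,\ldots,s\}$ onto $\{s,\ldots,1\}$ when $m=s+1$, so the claim holds. Similarly, $B^k_k=\prod_{j=0}^{k-1}t_{2j}t_{2j+3}$ with $m=2k+1$. Here $\sigma$ sends $t_{2j}\mapsto t_{2k+1-2j}=t_{2(k-j-1)+3}$ and $t_{2j+3}\mapsto t_{2(k-j-1)}$. Thus the two factor families are swapped, and $B^k_k$ is invariant. For $s<k$ we have $B^k_s=0$ (from $T^k_s=0$), which is trivially symmetric; this also takes care of $B^{k-1}_{s-1}$ when $s-1<k-1$ never arises badly.

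For the inductive step, take $s>k\ge 1$ and assume symmetry for all $(k',s')$ with $k'+s'<k+s$. Start from (\ref{654443287}) at $n=0$:
\[
B^k_s[0]=t_{m-1}B^k_{s-1}[0]+t_{m-3}t_{m}B^{k-1}_{s-1}[0].
\]
Apply $\sigma$ term by term. Since $\sigma(t_{m-1})=t_1$, $\sigma(t_{m-3})=t_3$ and $\sigma(t_m)=t_0$, the coefficients become $t_1$ and $t_0t_3$. For the polynomial factors:
\begin{itemize}
\item $B^k_{s-1}[0]$ involves $t_0,\ldots,t_{m-1}$. Under $\sigma$, $t_j\mapsto t_{m-j}=t_{1+((m-1)-j)}$, which is the composite of its own internal reversal with the shift by $1$. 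The induction hypothesis therefore gives $\sigma B^k_{s-1}[0]=B^k_{s-1}[1]$.
\item $B^{k-1}_{s-1}[0]$ involves $t_0,\ldots,t_{m-2}$. Here $t_j\mapsto t_{2+((m-2)-j)}$, so $\sigma B^{k-1}_{s-1}[0]=B^{k-1}_{s-1}[2]$.
\end{itemize}
Hence
\[
\sigma B^k_s[0]=t_1B^k_{s-1}[1]+t_0t_3B^{k-1}_{s-1}[2],
\]
which equals $B^k_s[0]$ by (\ref{65445554337}). This completes the induction.

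The main care point is the index bookkeeping. One must check that reversal on a window of length $m+1$, restricted to a sub-window, equals the internal reversal of that sub-window followed by a shift. One must also confirm that the boundary cases are genuinely covered by the base cases: $s-1=k$ gives $B^k_k$, and $k-1=0$ gives $B^0_{s-1}$, including the convention for $B^0$ at small $s$. The algebraic content itself is just the duality between (\ref{654443287}) and (\ref{65445554337}), which ultimately reflects $\lambda_j\mapsto s-1-\lambda_{k+1-j}$ on $D_{k,s}$.
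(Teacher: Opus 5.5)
Your proof is correct and is essentially the paper's argument. You reverse the variables in (\ref{654443287}), use the induction hypothesis to identify the reversed factors as the shifted polynomials $B^k_{s-1}(1)$ and $B^{k-1}_{s-1}(2)$, and recognize the result as (\ref{65445554337}); your explicit treatment of all $B^0_s$ and all $B^k_k$ as base cases is more careful than the paper, which lists only $B^0_1$ and $B^1_1$ and covers the rest with ``and so on.''
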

\begin{proof}
From (\ref{654443287}) it follows the recurrent relation
\begin{equation}
B^k_s\left(t_0,\ldots, t_{k+s+1}\right)=t_{k+s}B^k_{s-1}\left(t_0,\ldots, t_{k+s}\right)+t_{k+s-2}t_{k+s+1}B^{k-1}_{s-1}\left(t_0,\ldots, t_{k+s-1}\right).
\label{867432098}
\end{equation}
Given any values of $\left(k, s\right)$, we perform the substitution 
\[
t_j\rightarrow t_{k+s+1-j},\; j=0,\ldots, k+s+1
\]
in (\ref{867432098}). It thereby turns into
\[
B^k_s\left(t_{k+s+1},\ldots, t_{0}\right)=t_{1}B^k_{s-1}\left(t_{k+s+1},\ldots, t_{1}\right)+t_{3}t_{0}B^{k-1}_{s-1}\left(t_{k+s+1},\ldots, t_{2}\right).
\]
Suppose now  that the statement of the lemma has already been proved for $B^k_{s-1}$ and $B^{k-1}_{s-1}$. Then, by virtue of (\ref{65445554337}), we obtain this property for $B^k_{s}$. It is now clear that the lemma admits a proof by induction. Indeed, since $B^0_1=t_1$ and $B^1_1=t_0t_3$ satisfy the statement of the lemma, it follows by induction that this statement holds for all polynomials $B^1_s,\; s\geq 2$. Next, by induction, the statement of the lemma is proved for all $B^2_s$, and so on and so forth.
\end{proof}
Notice that having proved this lemma, we have thereby proved Proposition \ref{004432767}.

Let us now look at what recurrence relation we obtain as a result of substituting (\ref{676678888768}) into (\ref{6766438}). We get:
\begin{equation}
\prod_{j=3}^{2g} t_{n+j}\cdot t_nt_{n+2g+3}=\sum_{j=0}^g \alpha_j B^j_{2g-j}(n+1).
\label{564000032987}
\end{equation}
In addition, note that 
\[
B^0_{2g}(n+1)=\prod_{j=3}^{2g} t_{n+j}\cdot t_{n+2}t_{n+2g+1}\; \mbox{and}\; B^g_g(n+1)=\prod_{j=3}^{2g} t_{n+j}\cdot t_{n+1}t_{n+2g+2}.
\] 
Taking into account these expressions, we can rewrite (\ref{564000032987}) as
\begin{equation}
\prod_{j=3}^{2g} t_{n+j}\cdot \left(t_nt_{n+2g+3}-\alpha_g t_{n+1}t_{n+2g+2}-\alpha_0 t_{n+2}t_{n+2g+1}\right)=\sum_{j=1}^{g-1}\alpha_j B^j_{2g-j}(n+1)
\label{5640087}
\end{equation}
Rewriting (\ref{564000032987}) in the form (\ref{5640087})  gives a proof of Proposition \ref{6754321}.
\begin{lemma} \label{9865987}
The number of monomials on the right-hand side of relation (\ref{6766438}) is equal to $F_{2g+1}$.
\end{lemma}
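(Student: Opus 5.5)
The plan is to count monomials directly from the explicit formula (\ref{67999998}) and then reduce the count to a classical binomial identity for Fibonacci numbers.

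\emph{Step 1: distinct $j$ do not interact.} The right-hand side of (\ref{6766438}) is $\sum_{j=0}^g \alpha_j T^j_{2g-j}(n+1)$. Since the $\alpha_j$ are independent parameters, monomials carrying different $\alpha_j$ are distinct. Hence the total number of monomials is $\sum_{j=0}^g N_j$, where $N_j$ is the number of distinct monomials of $T^j_{2g-j}(n+1)$ in the variables $u$.

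\emph{Step 2: each $T^k_s$ has exactly $\binom{s}{k}$ monomials.} By (\ref{67999998}), $T^k_s(n)$ is a sum with coefficient $1$ over $D_{k,s}$, and $|D_{k,s}|=\binom{s}{k}$. All coefficients are positive, so no cancellation can occur. It remains to check that different elements of $D_{k,s}$ give different monomials. For this, set $m_j=\lambda_j+j-1$. The inequalities $\lambda_{j+1}>\lambda_j$ give $m_{j+1}-m_j\geq 2$, so the $m_j$ are distinct and the monomial $\prod_j u_{n+m_j}$ is squarefree. The increasing sequence $(m_j)$ is recovered from the index set of the monomial, and $\lambda_j=m_j-j+1$ then recovers $\lambda$, so the map is injective. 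Thus $N_j=\binom{2g-j}{j}$; the case $j=0$ gives $1$, consistent with $T^0_{2g}=1$.

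\emph{Step 3: the Fibonacci identity.} We need
\[
\sum_{j=0}^{g}\binom{2g-j}{j}=F_{2g+1}.
\]
I would prove the more general statement $a_m:=\sum_{j\geq 0}\binom{m-j}{j}=F_{m+1}$, with $F_1=F_2=1$. The initial values are $a_0=a_1=1$. Pascal's rule $\binom{m-j}{j}=\binom{m-1-j}{j}+\binom{m-1-j}{j-1}$ gives $a_m=a_{m-1}+a_{m-2}$, after reindexing the second sum by $j\mapsto j+1$. Taking $m=2g$ gives the claim, since terms with $j>g$ vanish. As a check, $g=1$ gives $1+1=2=F_3$, matching Somos-5, and $g=2$ gives $1+3+1=5=F_5$.

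The only point requiring care is the injectivity in Step 2, i.e.\ that no two index sets in $D_{k,s}$ collapse to the same monomial. The gap condition $m_{j+1}-m_j\geq2$ settles this cleanly.
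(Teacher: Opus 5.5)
Your proof is correct and follows the same outline as the paper: you count $\binom{2g-j}{j}$ monomials in each $T^j_{2g-j}(n+1)$ and sum them using $\sum_{j=0}^{g}\binom{2g-j}{j}=F_{2g+1}$. The differences are only in the sub-steps. The paper obtains $\binom{s}{k}$ by induction on $k$ from the recursion (\ref{676678}) together with $\binom{s}{k}=\sum_{j=0}^{s-k}\binom{s-j-1}{k-1}$, and simply cites the Fibonacci formula. You instead read $\#D_{k,s}$ off directly as a count of $k$-subsets, explicitly check via the gap $m_{j+1}-m_j\geq 2$ that distinct elements of $D_{k,s}$ give distinct monomials (a point the paper leaves implicit), and prove the Fibonacci identity from Pascal's rule.
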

\begin{proof}
Obviously, $\# D_{1, s}=s$. From (\ref{676678}) and the binomial identity
\[
\left(\begin{array}{c} s\\ k \end{array}\right)=\sum_{j=0}^{s-k} \left(\begin{array}{c} s-j-1\\ k-1 \end{array}\right),
\]
by induction we obtain that the number of monomials occurring in $T^k_s(n)$ is equal to the binomial coefficient $\binom{s}{k}$, or in other words, $\# D_{k, s}=\binom{s}{k}$. In turn, using the well-known formula for Fibonacci numbers
\[
F_{2g+1}=\sum_{j=0}^g \binom{2g-j}{j},
\]
we obtain that the number of monomials on the right-hand side of (\ref{6766438}) is equal to $F_{2g+1}$.
\end{proof}
According to Lemma \ref{9865987}, the number of monomials on the right-hand side of (\ref{564000032987}) is equal to $F_{2g+1}$. Thus, the total number of monomials defining (\ref{564000032987}) is $F_{2g+1}+1$, and therefore we have proved Proposition \ref{00098767}.

\section{Lax representation for equation (\ref{6766438})}   \label{765439876}

The purpose of this section is to show the Lax representation for recurrence (\ref{6766438}). Since discrete polynomials $T^k_s(n)$ are involved in the definition of the recurrence (\ref{6766438}), it is natural to expect that they will somehow appear in the Lax representation.

\subsection{Lax pair}

In this section, we construct a Lax representation for the associated recurrence (\ref{6766438}). To this end, consider a pair of matrices
\[
L_n(x)=
\left(
\begin{array}{cc}
P_n(x) & R_n(x) \\
Q_n(x) & -P_n(x) 
\end{array}
\right)\; \mbox{and}\;\;
M_n(x)= 
\left(
\begin{array}{cc}
1 & u_n x \\
1 & 0 
\end{array}
\right)
\]
and a relation of the form
\begin{equation}
L_n M_n=M_n  L_{n+1},
\label{778}
\end{equation}
where $\left(P_n, Q_n, R_n\right)_{n\in\mathbb{Z}}$ is a sequence of triples of polynomials
\begin{equation}
P_n(x)=1+\sum_{j= 1}^g p_{j, n}x^j,\;\; Q_n(x)=2+\sum_{j= 1}^g q_{j, n}x^j\;\; \mbox{and}\;\;  R_n(x)=\sum_{j= 1}^{g+1} r_{j, n}x^j.
\label{79998}
\end{equation}
We want to define the coefficients of these polynomials in such a way that relation (\ref{778}) becomes generating one for (\ref{6766438}).
\begin{theorem}   \label{8975465}
Equation (\ref{6766438}) admits a Lax representation (\ref{778}), with the coefficients of polynomials (\ref{79998}) of the form
\begin{equation}
q_{k, n}=2T^k_{2g-k}(n+1)+2\sum_{j=1}^k T^{k-j}_{2g-k-j}(n+1)\frac{T_{j, g-j}(n+1)}{\prod_{s=0}^{2g-2j} u_{n+s}},
\label{7766548888}
\end{equation}
\begin{equation}
r_{k, n}=2\sum_{j=0}^{k-1} T^{k-j-1}_{2g-k-j+1}(n)\frac{T_{j, g-j}(n)}{\prod_{s=0}^{2g-2j-1} u_{n+s}},
\label{7766548}
\end{equation}
and
\begin{eqnarray}
p_{k, n}&=&T^k_{2g-k+1}(n)+\sum_{j=1}^k T^{k-j}_{2g-k-j}(n+1)\frac{T_{j, g-j}(n+1)}{\prod_{s=0}^{2g-2j} u_{n+s}} \nonumber\\
&& -\sum_{j=0}^{k-1} T^{k-j-1}_{2g-k-j+1}(n)\frac{T_{j, g-j}(n)}{\prod_{s=0}^{2g-2j-1} u_{n+s}} +T^{k-2}_{2g-k}(n+2)\frac{T_{0, g}(n+1)}{\prod_{j=1}^{2g-1} u_{n+j}} \nonumber\\
&&+u_n\sum_{j=1}^{k-1} T^{k-j-1}_{2g-k-j+1}(n+2)\frac{T_{j, g-j}(n+2)}{\prod_{s=1}^{2g-2j+1} u_{n+s}}.
\label{5643987}
\end{eqnarray}
\end{theorem}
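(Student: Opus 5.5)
Write $L_n M_n=M_n L_{n+1}$ entrywise with $M_n=\left(\begin{smallmatrix}1&u_nx\\1&0\end{smallmatrix}\right)$. We have
\[
L_nM_n=\left(\begin{array}{cc} P_n+R_n & u_nxP_n\\ Q_n-P_n & u_nxQ_n\end{array}\right),\qquad
M_nL_{n+1}=\left(\begin{array}{cc} P_{n+1}+u_nxQ_{n+1} & R_{n+1}-u_nxP_{n+1}\\ P_{n+1} & R_{n+1}\end{array}\right).
\]
The Lax equation is therefore equivalent to the system
\begin{equation*}
R_{n+1}=u_nxQ_n,\qquad Q_n=P_n+P_{n+1},\qquad R_n=P_{n+1}-P_n+u_nxQ_{n+1},
\end{equation*}
since the $(1,2)$ entry follows from the others. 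Two consequences follow at once. The determinant $-(P_n^2+Q_nR_n)$ is conserved. Eliminating $Q$ and $R$ gives a single second-order scalar relation for $P_n$:
\[
u_{n-1}x(P_{n-1}+P_n)=P_{n+1}-P_n+u_nx(P_{n+1}+P_{n+2}).
\]
The plan is to check coefficientwise in $x$ that the proposed $p_{k,n},q_{k,n},r_{k,n}$ satisfy this system exactly when (\ref{6766438}) holds. I will interpret $T_{j,g-j}(n)$ as $T^j_{2g-j}(n)$-type data. The key point is that $T_{0,g}/\prod u$ and the other quotients become polynomial only after (\ref{6766438}) is used to replace $\prod_{j=0}^{2g}u_{n+j}$.

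The order of verification is as follows. First, check the constant terms: $P_n=1$, $Q_n=2$, $R_n$ has no constant term, and the $x^0$ relations $Q=P_n+P_{n+1}$ and $R_n=P_{n+1}-P_n$ hold trivially. Second, verify $Q_n=P_n+P_{n+1}$ at order $x^k$, i.e.\ $q_{k,n}=p_{k,n}+p_{k,n+1}$. Substituting (\ref{5643987}) at $n$ and $n+1$, the differences of the $r$-type sums telescope against each other. What remains is to show that $T^k_{2g-k+1}(n)+T^k_{2g-k+1}(n+1)$ plus the extra terms reproduces $2T^k_{2g-k}(n+1)$ plus the correction sums. For this I use Lemma \ref{65439876}: (\ref{6540987}) and (\ref{67776587}) split $T^k_{2g-k+1}$ into $T^k_{2g-k}$ plus a $u\,T^{k-1}$ term, and those $u\,T^{k-1}$ pieces cancel the last two lines of (\ref{5643987}). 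Third, verify $r_{k+1,n+1}=u_nq_{k,n}$. Here I again use (\ref{67776587}) in the form $u_nT^{k-1}_{s-1}(n+2)=T^k_s(n)-T^k_{s-1}(n+1)$ to shift arguments, and I multiply numerators and denominators by $u_n$ to match $\prod_{s=0}^{2g-2j-1}u_{n+1+s}$ against $\prod_{s=0}^{2g-2j}u_{n+s}$. Finally, the third relation follows from the first two together with the scalar $P$-equation, which I verify coefficientwise by the same splitting identities.

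The essential place where the equation enters is the top degree, $k=g$ (and $x^{g+1}$ in $R$). For $R_{n+1}=u_nxQ_n$, the leading coefficient comparison $r_{g+1,n+1}=u_nq_{g,n}$ involves $T^0$-terms over $\prod_{s=0}^{2g}u_{n+s}$. Replacing this product by $\sum_j\alpha_jT^j_{2g-j}(n+1)$ via (\ref{6766438}) at index $n$ versus $n+1$ is exactly what closes the identity. Conversely, requiring the coefficient of $x^{g+1}$ in $R_n=P_{n+1}-P_n+u_nxQ_{n+1}$ to vanish appropriately forces (\ref{6766438}). This shows that the Lax pair generates the recurrence, not merely that it is compatible with it.

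I expect the main obstacle to be the bookkeeping in the second and third steps: matching the several sums in (\ref{5643987}) after index shifts, with denominators of varying length. I would first run the case $g=1$ (Somos-5, where everything is explicit) and then $g=2$ to fix the pattern. For general $g$, I would organize the computation through the generating function $\sum_k T^k_{s}(n)x^k$. In that form, Lemma \ref{65439876} reads as a first-order relation, and the Lax system reduces to comparing two such generating functions, with the recurrence entering only through the truncation at degree $g$.
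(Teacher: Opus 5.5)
\textbf{There is a genuine gap: your Step~2 cannot close as described.} Your componentwise reduction of $L_nM_n=M_nL_{n+1}$ is correct. Checking the resulting relations coefficient by coefficient is also what the paper does. The problem is how you expect those checks to close.

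First, $T_{j,g-j}(n)$ is not ``$T^j_{2g-j}$-type data''. By (\ref{786543}) it is the weighted sum $T_{s,k}(n)=\sum_{i=0}^{k}\alpha_{s+i}T^{i}_{2k-i}(n)$. So $T_{0,g}(n+1)$ is exactly the right-hand side of (\ref{6766438}), and this is the only way the $\alpha_j$ enter. Under your reading the Lax matrix would not involve the $\alpha_j$ at all.

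Second, the recurrence is not confined to top degree.
\begin{itemize}
\item At order $x^1$ the relation $R_{n+1}=u_nxQ_n$ reads $r_{1,n+1}=2u_n$, i.e.\ $T_{0,g}(n+1)=\prod_{s=0}^{2g}u_{n+s}$. That is the recurrence itself.
\item For every $k$, the $j=0$ term of $r_{k,n+1}$ needs the recurrence.
\item The term $T^{k-2}_{2g-k}(n+2)T_{0,g}(n+1)/\prod_{j=1}^{2g-1}u_{n+j}$ in (\ref{5643987}) comes from eliminating $u_{n+2g}$ by the recurrence. So the $(1,1)$ relation holds only modulo (\ref{6766438}) for every $k\ge 2$.
\end{itemize}
A smaller point: the coefficients are genuinely Laurent, e.g.\ $q_{1,n}=2u_{n+1}+2\alpha_1/u_n$ for $g=1$. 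The recurrence does not make them polynomial.

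Most seriously, Step~2 cannot close by splitting and telescoping alone. Take $k=1$ and any $g$. The correction terms $T^{k-2}_{2g-k}(n+2)(\cdots)$ and $u_n\sum_{j=1}^{k-1}(\cdots)$ in (\ref{5643987}) are absent. Yet $T^1_{2g}(n)+T^1_{2g}(n+1)-2T^1_{2g-1}(n+1)=u_n+u_{n+2g}$ survives. Using $o_{0,g}(n)=0$ one finds
\[
p_{1,n}+p_{1,n+1}-q_{1,n}=(u_n+u_{n+2g})\Bigl(1-\frac{T_{0,g}(n+1)}{\prod_{s=0}^{2g}u_{n+s}}\Bigr),
\]
which vanishes only on solutions of (\ref{6766438}).

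The missing idea is the actual content of the paper's proof. For each $k$, subtract the explicit multiple
\[
\bigl(u_nT^{k-1}_{2g-k}(n+2)+u_{n+2g}T^{k-1}_{2g-k}(n+1)\bigr)\bigl(\prod_{j=0}^{2g}u_{n+j}-T_{0,g}(n+1)\bigr)
\]
of the recurrence from $\prod_{j=0}^{2g}u_{n+j}\,(p_{k,n}+p_{k,n+1}-q_{k,n})$. Then show that the remainder $\mathcal{O}_{g,k}$ is a linear combination of the weighted identities $o_{j,g-j}(n)=0$ of Lemma \ref{67543987}; for $k=1$ it is a multiple of $o_{0,g}(n)$. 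Those identities for the $\alpha$-weighted sums, which are termwise consequences of Lemma \ref{65439876}, are what match the sums in (\ref{5643987}). The free terms $u_nT^{k-1}_{2g-k}(n+2)+u_{n+2g}T^{k-1}_{2g-k}(n+1)$ left by the splitting identities can only be absorbed by the recurrence.
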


We will prove Theorem \ref{8975465} later, but for now let us  discuss some details.
The formulas in this theorem for the coefficients of the polynomials $\left(P_n, Q_n, R_n\right)$ look rather cumbersome. To avoid excessive cumbersomeness, we use the following notation:
\begin{equation}
T_{s, k}(n)=\sum_{j=0}^k \alpha_{s+j}T^{j}_{2k-j}(n).
\label{786543}
\end{equation}
For example, 
\[
T_{s, 0}(n)=\alpha_{s},\;\; T_{s, 1}(n)=\alpha_{s}+\alpha_{s+1}T^1_1(n),
\]
\[
T_{s, 2}(n)=\alpha_{s}+\alpha_{s+1}T^1_3(n) +\alpha_{s+2}T^2_2(n),\ldots
\]
One can verify that the coefficients of the polynomials $\left(P_n, Q_n, R_n\right)$ given in the theorem are Laurent polynomials in the variables $\left(u_n,\ldots, u_{n+2g-1}\right)$, and   depend linearly, albeit inhomogeneously, on the parameters $\left(\alpha_0,\ldots, \alpha_g\right)$ entering into  (\ref{6766438}). More precisely, the coefficients of the polynomials $\left(P_n, Q_n, R_n\right)$ are linear combinations of discrete Laurent polynomials. The coefficients $q_{k, n}$ and $p_{k, n}$ contain a \textit{free} term, that is, a term that is independent of $\alpha_k$.

\subsection{Proof of Theorem \ref{8975465}}

The proof of this theorem is technical and and possibly is not of sufficient interest.  

First of all we note that our associated recurrence  (\ref{6766438}), by virtue of the adopted notation (\ref{786543}), can be written as
\begin{equation}
\prod_{j=0}^{2g} u_{n+j}=T_{0, g}(n+1).
\label{77000988}
\end{equation}

To verify that, for (\ref{778}) to hold, it is necessary that the sequence of triples of polynomials $\left(P_n, Q_n, R_n\right)_{n\in\mathbb{Z}}$ satisfy the relations: 
\begin{equation}
P_n+P_{n+1}=Q_n,\;\; 2 P_n - Q_n + R_n=u_n x Q_{n+1}\;\; \mbox{and}\;\; R_{n+1}=u_n x Q_n.
\label{770098}
\end{equation}
Componentwise, they are written as:
\begin{equation}
p_{k, n+1}=q_{k, n}-p_{k, n},\;\; q_{k, n+1}=\frac{2 p_{k+1, n}-q_{k+1, n}+r_{k+1, n}}{u_n}\;\;\mbox{and}\;\; r_{k, n+1}= u_{n}q_{k-1, n}.
\label{770099990988}
\end{equation}
Here, in the third formula, in the case $k=1$, we  set $q_{0, n}=2$. 

Let  us proceed step by step. First, we verify the third relation in (\ref{770099990988}). Taking into account  (\ref{77000988}), one has
\begin{eqnarray*}
r_{k, n+1}&=&2T^{k-1}_{2g-k+1}(n+1)\frac{T_{0, g}(n+1)}{\prod_{s=1}^{2g} u_{n+s}}+2\sum_{j=1}^{k-1} T^{k-j-1}_{2g-k-j+1}(n+1)\frac{T_{j, g-j}(n+1)}{\prod_{s=1}^{2g-2j} u_{n+s}} \\
&=&2  u_n T^{k-1}_{2g-k+1}(n+1)+2u_n\sum_{j=1}^{k-1} T^{k-j-1}_{2g-k-j+1}(n+1)\frac{T_{j, g-j}(n+1)}{\prod_{s=0}^{2g-2j} u_{n+s}} \\
&=&u_n q_{k-1, n}.
\end{eqnarray*}
Thus this relation holds.

Further, we compute the coefficient $p_{k, n}$ so that the second relation in (\ref{770099990988}) is automatically satisfied, that is,
\begin{eqnarray*}
p_{k, n}&=&\frac{q_{k, n}}{2}-\frac{r_{k, n}}{2} + u_n\frac{q_{k-1, n+1}}{2}   \\
&=&T^k_{2g-k}(n+1)+\sum_{j=1}^k T^{k-j}_{2g-k-j}(n+1)\frac{T_{j, g-j}(n+1)}{\prod_{s=0}^{2g-2j} u_{n+s}} \\
&& -\sum_{j=0}^{k-1} T^{k-j-1}_{2g-k-j+1}(n)\frac{T_{j, g-j}(n)}{\prod_{s=0}^{2g-2j-1} u_{n+s}}  \\
&& +u_nT^{k-1}_{2g-k+1}(n+2)+u_n\sum_{j=1}^{k-1} T^{k-j-1}_{2g-k-j+1}(n+2)\frac{T_{j, g-j}(n+2)}{\prod_{s=1}^{2g-2j+1} u_{n+s}}.
\end{eqnarray*}
We  need here to exclude $u_{n+2g}$  with the help  of (\ref{77000988}). It is contained only in  $u_nT^{k-1}_{2g-k+1}(n+2)$.
To express the dependence on $u_{n+2g}$ more explicitly, we must use the identity
\[
T^{k-1}_{2g-k+1}(n+2)=T^{k-1}_{2g-k}(n+2) + u_{n+2g} T^{k-2}_{2g-k}(n+2)
\]
that follows from (\ref{6540987}).  Besides, we use (\ref{77000988}) to eliminate $u_{n+2g}$. Then
\begin{eqnarray*}
u_nT^{k-1}_{2g-k+1}(n+2)&=&u_nT^{k-1}_{2g-k}(n+2)+u_nu_{n+2g}T^{k-2}_{2g-k}(n+2) \\
&=&u_nT^{k-1}_{2g-k}(n+2)+T^{k-2}_{2g-k}(n+2)\frac{T_{0, g}(n+1)}{\prod_{j=1}^{2g-1} u_{n+j}}.
\end{eqnarray*}
Now it remains to use identity 
\[
T^k_{2g-k+1}(n)=T^k_{2g-k}(n+1)+u_nT^{k-1}_{2g-k}(n+2)
\]
to  obtain the expression (\ref{5643987}) we need.

Let us define the following expression:
\begin{eqnarray}
o_{k, s}(n)&=&T_{k, s}(n)-T_{k, s}(n+1)+u_{n+2s-1} T_{k+1, s-1}(n+1) \nonumber\\
&&-u_nT_{k+1, s-1}(n+2),\;\;\forall s\geq 0,
\label{76543}
\end{eqnarray}
where $T_{k, s}(n)$ is linear combination  (\ref{786543}).
\begin{lemma} \label{67543987}
The relation $o_{k, s}(n)=0$ is an identity.
\end{lemma}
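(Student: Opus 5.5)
Since $o_{k,s}(n)$ is linear in the parameters $\alpha_j$, I will prove the identity coefficient by coefficient. Substituting the definition (\ref{786543}) gives
\[
T_{k,s}(n)=\sum_{j=0}^{s}\alpha_{k+j}T^j_{2s-j}(n),\qquad T_{k+1,s-1}(n)=\sum_{j=1}^{s}\alpha_{k+j}T^{j-1}_{2s-j-1}(n).
\]
Collecting the coefficient of $\alpha_{k+j}$ in (\ref{76543}) therefore reduces the lemma to showing that, for every $0\le j\le s$,
\[
T^j_{2s-j}(n)-T^j_{2s-j}(n+1)+u_{n+2s-1}T^{j-1}_{2s-j-1}(n+1)-u_nT^{j-1}_{2s-j-1}(n+2)=0,
\]
with the convention $T^{-1}\equiv 0$. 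Thus the dependence on $k$ drops out, and the lemma becomes a family of identities among the discrete polynomials alone.

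For $j=0$ the identity reads $1-1=0$. For $j\ge1$ I will apply Lemma \ref{65439876} twice, with $k=j$ and $s\to 2s-j$. Identity (\ref{67776587}) at argument $n$ gives
\[
T^j_{2s-j}(n)=T^j_{2s-j-1}(n+1)+u_nT^{j-1}_{2s-j-1}(n+2).
\]
Identity (\ref{6540987}) at argument $n+1$ has index $(n+1)+(2s-j)+j-2=n+2s-1$, so it gives
\[
T^j_{2s-j}(n+1)=T^j_{2s-j-1}(n+1)+u_{n+2s-1}T^{j-1}_{2s-j-1}(n+1).
\]
Subtracting the second equation from the first produces exactly the required relation. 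Summing over $j$ with weights $\alpha_{k+j}$ then yields $o_{k,s}(n)=0$.

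The only real obstacle is bookkeeping at the boundary cases. When $j=s$ one has $T^{s}_{s}$, and Lemma \ref{65439876} is applied with lower index $s-1$. The resulting term $T^s_{s-1}$ vanishes by the convention $D_{k,s}=\varnothing$, and $T^{s-1}_{s-1}$ is also fine. I must check that the lemma is valid there, which it is by the set-partition proof, since an empty $D_{k,s}$ just gives a zero term. I also need to confirm that the shift of the summation index $j$ in $T_{k+1,s-1}$ matches as above. For $s=0$ the whole expression is $\alpha_k-\alpha_k=0$, because the terms involving $T_{k+1,-1}$ are absent, so that case is trivial.
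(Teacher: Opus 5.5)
Your proof is correct and follows the paper's argument: the paper likewise splits $o_{k,s}(n)=0$, coefficient by coefficient in the $\alpha$'s, into the identities $T^j_{2s-j}(n)+u_{n+2s-1}T^{j-1}_{2s-j-1}(n+1)=T^j_{2s-j}(n+1)+u_nT^{j-1}_{2s-j-1}(n+2)$, and obtains these from Lemma \ref{65439876}. You make explicit what the paper leaves implicit, namely applying (\ref{67776587}) at $n$ and (\ref{6540987}) at $n+1$ so that the shared term $T^j_{2s-j-1}(n+1)$ cancels, and your boundary checks ($j=0$, $j=s$, $s=0$) are sound.
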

\begin{proof}
This relation is a generating one for a finite set of identities of the form
\[
T^k_{2s-k}(n)+u_{n+2s-1} T^{k-1}_{2s-k-1}(n+1)=T^k_{2s-k}(n+1)+u_{n} T^{k-1}_{2s-k-1}(n+2),
\]
and they in turn are a special case of identities of the form (\ref{6540987})=(\ref{67776587}).
\end{proof}

To finally prove the theorem, it is necessary to verify the first relation in (\ref{770099990988}). 
Let us denote
\begin{eqnarray}
\mathcal{O}_{g, k}&=&\prod_{j=0}^{2g} u_{n+j}\left(p_{k, n}+p_{k, n+1}-q_{k, n}\right) 
-\left(u_n T^{k-1}_{2g-k}(n+2)  +u_{n+2g} T^{k-1}_{2g-k}(n+1)\right) \nonumber \\
&&\times\left(\prod_{j=0}^{2g} u_{n+j}-T_{0, g}(n+1)\right).
\label{76589654}
\end{eqnarray}
First of all, note that in (\ref{76589654}) the \textit{free} term is absent due to the identity 
\begin{eqnarray*}
&&T^k_{2g-k+1}(n)+T^k_{2g-k+1}(n+1)-2T^k_{2g-k}(n+1) \\
&&\;\;\;\;\;\;\;\;\;\;\;\;\;\;\;\;\;-u_n T^{k-1}_{2g-k}(n+2)-u_{n+2g} T^{k-1}_{2g-k}(n+1)=0.
\end{eqnarray*}
The last relation can be obtained by combining identities (\ref{6540987} ) and (\ref{67776587}).
It can be verified  that all $\mathcal{O}_{g, k}$ are expressed linearly and homogeneously through  combinations  $T_{k, g-k}(n)$ only in connectives $o_{k, g-k}(n)$ defined by (\ref{76543}). For example,  $\mathcal{O}_{g, 1}=u_{n+2g}o_{0, g}(n)$. Due to  Lemma \ref{67543987}, we get that $\mathcal{O}_{g, k}=0$.  And this in turn means that the first relation in (\ref{770099990988}) holds provided  (\ref{77000988}) is satisfied.

\subsection{Example of Lax representation. Somos-5}

Let us consider the simplest case $g=1$.  According to  Theorem \ref{8975465}, for the recurrence 
\[
u_nu_{n+1}u_{n+2}=\alpha_0+\alpha_1 u_{n+1},
\]
the coefficients of the polynomials (\ref{79998}) are as follows:  
\begin{eqnarray}
q_{1, n}&= &2u_{n+1}+2\frac{\alpha_1}{u_n} \nonumber\\
&= &2\frac{t_{n+1}t_{n+4}}{t_{n+2}t_{n+3}}+2\alpha_1 \frac{t_{n+1}t_{n+2}}{t_{n}t_{n+3}},
\label{8754309}
\end{eqnarray}
\begin{eqnarray}
p_{1, n}&=& u_n+u_{n+1}+\frac{\alpha_1}{u_n}-\frac{\alpha_0+u_n\alpha_1}{u_nu_{n+1}} \nonumber\\
&=& \frac{t_{n}t_{n+3}}{t_{n+1}t_{n+2}}+\frac{t_{n+1}t_{n+4}}{t_{n+2}t_{n+3}}+\alpha_1 \left(\frac{t_{n+1}t_{n+2}}{t_{n}t_{n+3}}-\frac{t_{n+2}t_{n+3}}{t_{n+1}t_{n+4}}\right)-\alpha_0 \frac{t^2_{n+2}}{t_{n}t_{n+4}},
\label{875439986509} 
\end{eqnarray}
\begin{eqnarray}
r_{1, n}&=&2\frac{\alpha_0+u_n\alpha_1}{u_nu_{n+1}} \nonumber\\
&=&2\alpha_0 \frac{t^2_{n+2}}{t_{n}t_{n+4}}+2\alpha_1 \frac{t_{n+2}t_{n+3}}{t_{n+1}t_{n+4}}
\label{875476486509777} 
\end{eqnarray}
and
\begin{eqnarray}
r_{2, n}&=&2\frac{\alpha_0+u_n\alpha_1}{u_{n+1}}+2\alpha_1 \nonumber\\
&=&2\alpha_0 \frac{t_{n+2}t_{n+3}}{t_{n+1}t_{n+4}}+ 2\alpha_1\left(\frac{t_nt^2_{n+3}}{t^2_{n+1}t_{n+4}}+1\right).
\label{875476486509} 
\end{eqnarray}
Notice that we made the substitution  (\ref{676678888768}) here.
After that we obtained  expressions (\ref{8754309}),  (\ref{875439986509}),  (\ref{875476486509777}) and (\ref{875476486509}) which  define a Lax pair for the  Somos-5 given by (\ref{889987}). One can directly verify that each of the relations 
\[
p_{1, n+1}=q_{1, n}-p_{1, n},\;\; q_{1, n+1}=\frac{r_{2, n}}{u_n},\;\; r_{1, n+1}= 2 u_{n}\;\;\mbox{and}\;\; r_{2, n+1}= u_{n}q_{1, n}
\]
is equivalent to Somos-5. Observe that 
\[
\frac{q_{1, 0}}{2}=\frac{t_{1}t_{4}}{t_{2}t_{3}}+\alpha_1 \frac{t_{1}t_{2}}{t_{0}t_{3}}\;\;\mbox{and} \;\;
p_{1, 0}-q_{1, 0}=\frac{t_{0}t_{3}}{t_{1}t_{2}}-\frac{t_{1}t_{4}}{t_{2}t_{3}}-\alpha_1 \left(\frac{t_{1}t_{2}}{t_{0}t_{3}}+\frac{t_{2}t_{3}}{t_{1}t_{4}}\right)-\alpha_0 \frac{t^2_{2}}{t_{0}t_{4}}
\]
belong to the ring $\mathcal{L}_5=\mathbb{Z}[t_0^{\pm 1}, t_1^{\pm 1}, t_2^{\pm 1}, t_3^{\pm 1}, t_4^{\pm 1}; \alpha_0, \alpha_1]$.

\begin{remark}
Of course, using Theorem \ref{8975465}, one could give examples of Lax pairs for $R_7$, $R_9$, and so on, but the corresponding formulas are extremely cumbersome and there is no point in presenting them.
\end{remark}

\subsection{Consequences of Theorem \ref{8975465}}

Notice that $q_{k, 0}$ and $p_{k, 0}$ as Laurent polynomials of the arguments  $\left(u_0,\ldots,  u_{2g-1}\right)$ are given to us explicitly thanks to Theorem \ref{8975465}. 
If we make a substitution  (\ref{676678888768}) into these functions,  they  become Laurent polynomials of $\left(t_0,\ldots,  t_{2g+3}\right)$.
The following lemma is important for the proof of  Theorem \ref{76888766}, which we will prove in the next section.
\begin{lemma} \label{6754329876} 
Functions $q_{k, 0}/2$ and $p_{k, 0}-q_{k, 0}$ belong to the ring 
\[
\mathcal{L}_{2g+3}=\mathbb{Z}[t_0^{\pm 1},\ldots,  t_{2g+2}^{\pm 1}; \alpha_0,\ldots, \alpha_g],
\]  
for all $k=1,\ldots, g$ and $g\geq 1$.
\end{lemma}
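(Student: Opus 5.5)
The plan is to read the claim off the explicit formulas of Theorem \ref{8975465}. There are two observations. First, each $u_m$ with $0\leq m\leq 2g-1$ becomes, under (\ref{676678888768}), the Laurent monomial $t_mt_{m+3}/(t_{m+1}t_{m+2})$ in the variables $t_0,\ldots,t_{2g+2}$. Second, every ingredient of (\ref{7766548888}) and (\ref{5643987}) at $n=0$ is an element of $\mathbb{Z}[u_0^{\pm1},\ldots,u_{2g-1}^{\pm1};\alpha_0,\ldots,\alpha_g]$. Since the substitution is a ring homomorphism from this ring into $\mathcal{L}_{2g+3}$, the lemma follows once both observations are established.

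For integrality, I will use (\ref{67999998}): every $T^k_s(n)$ is a sum of monomials in the $u$'s with coefficient $1$. Hence by (\ref{786543}) every $T_{s,k}(n)$ is a polynomial in the $u$'s and the $\alpha$'s with integer coefficients. The remaining factors in the formulas are reciprocals of products of $u$'s, i.e.\ Laurent monomials. In (\ref{7766548888}) every term carries the overall factor $2$, so $q_{k,0}/2$ has integer coefficients. The expression (\ref{5643987}) for $p_{k,0}$ already has integer coefficients, and therefore so does $p_{k,0}-q_{k,0}$. I deliberately use the form (\ref{5643987}), in which $u_{2g}$ has been eliminated via (\ref{77000988}), rather than the earlier form $q/2-r/2+u_nq_{k-1,n+1}/2$.

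For the index range, I will use the fact that $T^k_s(n)$ depends only on $u_n,\ldots,u_{n+k+s-2}$, and $T_{s,k}(n)$ only on $u_n,\ldots,u_{n+2k-2}$. I then check, term by term at $n=0$, the largest index occurring:
\begin{itemize}
\item $T^k_{2g-k}(1)$ reaches $u_{2g-1}$, and so does $T^k_{2g-k+1}(0)$.
\item For $j\geq1$, the factors $T^{k-j}_{2g-k-j}(1)$, $T_{j,g-j}(1)$ and $\prod_{s=0}^{2g-2j}u_s$ reach at most $u_{2g-2}$.
\item The factors $T^{k-j-1}_{2g-k-j+1}(0)$ and $T_{j,g-j}(0)$ reach at most $u_{2g-2j-1}$.
\item $T_{0,g}(1)$ and $\prod_{j=1}^{2g-1}u_j$ reach $u_{2g-1}$.
\item $T^{k-2}_{2g-k}(2)$ reaches $u_{2g-2}$.
\item For $j\geq1$, $T^{k-j-1}_{2g-k-j+1}(2)$, $T_{j,g-j}(2)$ and $\prod_{s=1}^{2g-2j+1}u_s$ reach at most $u_{2g-1}$.
\end{itemize}
All lower indices are $\geq0$. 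Hence only $u_0,\ldots,u_{2g-1}$ occur, and after substitution only $t_0,\ldots,t_{2g+2}$ occur.

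I expect the main obstacle to be bookkeeping rather than anything conceptual. One point is the boundary terms: terms with $T^{-1}$ or $T^{k-2}$ for $k=1$ vanish, consistent with $T^k_s=0$ for $s<k$. The more important point is making sure that no hidden $u_{2g}$ survives. Such a factor would bring in $t_{2g+3}$, which is not a generator of $\mathcal{L}_{2g+3}$. This is exactly why the elimination of $u_{2g}$ by (\ref{77000988}), carried out in the proof of Theorem \ref{8975465}, is essential. I will double-check that step, and compare the result with the case $g=1$, where the formulas are displayed explicitly above.
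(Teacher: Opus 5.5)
Your proposal is correct and follows the paper's own route: show that $q_{k,0}/2$ and $p_{k,0}-q_{k,0}$ lie in $\mathbb{Z}[u_0^{\pm1},\ldots,u_{2g-1}^{\pm1};\alpha_0,\ldots,\alpha_g]$, then apply the monomial substitution (\ref{676678888768}); you make explicit the index bookkeeping that the paper leaves to ``direct inspection'', and your term-by-term bounds check out. One minor overstatement: eliminating $u_{2g}$ is convenient but not essential, since $u_{2g}$ enters the unreduced form only polynomially and $t_{2g+3}$ already lies in $\mathcal{L}_{2g+3}$ after one step of $R_{2g+3}$ (whose denominator $t_0\prod_{j=3}^{2g}t_j$ is a monomial).
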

\begin{proof}
By direct inspection.  One has that $q_{k, 0}/2$ and $p_{k, 0}-q_{k, 0}$ belong to the ring $\mathbb{Z}[u_0^{\pm 1},\ldots,  u_{2g-1}^{\pm 1}; \alpha_0,\ldots, \alpha_g]$ for any $k=1,\ldots, g$. From this the lemma follows.
\end{proof}
The following lemma is useful for establishing the relationship between the associated recurrence (\ref{6766438}) and the so-called discrete Mumford dynamical system \cite{Hone1}.
\begin{lemma} \label{6753290}
Given any $g\geq 1$, by (\ref{7766548888}), (\ref{7766548}) and (\ref{5643987}), one has 
\[
p_{1,n}-\frac{q_{1,n}}{2}+\frac{r_{1,n}}{2}=u_n.
\]
\end{lemma}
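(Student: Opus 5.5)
The identity follows by specializing the explicit formulas of Theorem \ref{8975465} to $k=1$. After cancellation, only a difference of two $T^1$ polynomials remains, and Lemma \ref{65439876} evaluates it.

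\emph{Step 1: specialize the three formulas to $k=1$.} In (\ref{5643987}) with $k=1$, the term $T^{k-2}_{2g-k}(n+2)$ has upper index $-1$. By the convention implicit in (\ref{67999998}) and in its derivation via (\ref{6540987}), this polynomial vanishes, since $u_nu_{n+2g}T^{-1}=0$ when $k=1$. The last sum $\sum_{j=1}^{k-1}$ is empty. Using $T^0_s=1$, we get
\[
p_{1,n}=T^1_{2g}(n)+\frac{T_{1,g-1}(n+1)}{\prod_{s=0}^{2g-2}u_{n+s}}-\frac{T_{0,g}(n)}{\prod_{s=0}^{2g-1}u_{n+s}}.
\]
Formulas (\ref{7766548888}) and (\ref{7766548}) with $k=1$ give
\[
\frac{q_{1,n}}{2}=T^1_{2g-1}(n+1)+\frac{T_{1,g-1}(n+1)}{\prod_{s=0}^{2g-2}u_{n+s}},\qquad
\frac{r_{1,n}}{2}=\frac{T_{0,g}(n)}{\prod_{s=0}^{2g-1}u_{n+s}}.
\]

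\emph{Step 2: cancel the $\alpha$-dependent Laurent terms.} Substituting these into $p_{1,n}-q_{1,n}/2+r_{1,n}/2$, the two terms involving $T_{1,g-1}(n+1)$ cancel. The two terms involving $T_{0,g}(n)$ cancel as well. What remains is
\[
p_{1,n}-\frac{q_{1,n}}{2}+\frac{r_{1,n}}{2}=T^1_{2g}(n)-T^1_{2g-1}(n+1).
\]

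\emph{Step 3: apply Lemma \ref{65439876}.} Identity (\ref{67776587}) with $k=1$, $s=2g$ reads
\[
T^1_{2g}(n)=T^1_{2g-1}(n+1)+u_nT^0_{2g-1}(n+2).
\]
Since $T^0_{2g-1}=1$, the difference in Step 2 equals $u_n$.

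This argument is just the $k=1$ instance of the relation used in the proof of Theorem \ref{8975465}:
\[
p_{k,n}=\tfrac{q_{k,n}}{2}-\tfrac{r_{k,n}}{2}+u_n\tfrac{q_{k-1,n+1}}{2}, \qquad q_{0,n}=2.
\]
For $k=1$ the last term is $u_n$. The only delicate point is that for $k=1$ the later rewriting of $p_{k,n}$ that eliminates $u_{n+2g}$ does nothing, because $T^{-1}=0$. Hence the closed formula (\ref{5643987}) agrees with this defining relation in the $k=1$ case, and Steps 1–3 confirm this directly.
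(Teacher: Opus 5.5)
Your proof is correct and follows the paper's own argument. The paper likewise specializes (\ref{7766548888}), (\ref{7766548}) and (\ref{5643987}) to $k=1$, dropping the $T^{-1}$ term just as you do. It then leaves the cancellation and the identity $T^1_{2g}(n)-T^1_{2g-1}(n+1)=u_n$ to ``inspection''; you have written that step out explicitly via (\ref{67776587}).
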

\begin{proof}
For clarity, let us simply write down
\[
q_{1, n}=2T^1_{2g-1}(n+1)+2\frac{T_{1, g-1}(n+1)}{\prod_{s=0}^{2g-2} u_{n+s}},\;\; r_{1, n}=2\frac{T_{0, g}(n)}{\prod_{s=0}^{2g-1} u_{n+s}}
\]
and
\[
p_{1, n}=T^1_{2g}(n)+\frac{T_{1, g-1}(n+1)}{\prod_{s=0}^{2g-2} u_{n+s}}-\frac{T_{0, g}(n)}{\prod_{s=0}^{2g-1} u_{n+s}},
\]
after which the assertion of the lemma is easily verified by inspection.
\end{proof}

\section{Discrete Mumford dynamical system} \label{56432908}

\subsection{Definition}

Suppose now that the coefficients of the polynomials $\left(P_n, Q_n, R_n\right)$ do not depend on $\left(u_n,\ldots, u_{n+2g-1}\right)$ and $\left(\alpha_0,\ldots, \alpha_g\right)$ like in Theorem \ref{8975465} and generally do not depend on any dynamical variables. The Lax representation (\ref{778}) in such a case  can be found in a recent paper \cite{Hone1} by Hone, Roberts, and Vanhaecke.  
In more detail, the matrix relation (\ref{778}) can be written in the form 
\begin{equation}
P_{n+1}=Q_n-P_n,\;\;  Q_{n+1}=\frac{2 P_n - Q_n + R_n}{u_n x}\;\; \mbox{and}\;\; R_{n+1}=u_n x Q_n
\label{79997000988}
\end{equation}
provided we put
\begin{equation}
u_n=p_{1,n}-\frac{q_{1,n}}{2}+\frac{r_{1,n}}{2}.
\label{7999706548}
\end{equation}
Note that the last relation is exactly the same as in Lemma \ref{6753290}. It is also  natural that (\ref{770098}) and (\ref{79997000988}) look exactly the same in form; the difference is only that  in (\ref{770098}) it is supposed that  the coefficients of the polynomials are discrete rational functions of $\left(u_n,\ldots, u_{n+2g-1}\right)$, while in (\ref{79997000988}) they are not.
Under Lemma \ref{6753290}, relation (\ref{7999706548}) is consistent with (\ref{7766548888}), (\ref{7766548}), and (\ref{5643987}) and follows from them.   However, if coefficients of polynomials  $\left(P_n, Q_n, R_n\right)$ are independent of anything, then  (\ref{7999706548}) is merely a  notation for correctly defining the system of discrete equations (\ref{79997000988}).

System (\ref{79997000988}) defines a discrete dynamical system on the phase space $M_g\cong \mathbb{C}^{3g+1}$, whose points are specified by coordinates $\left(q_1,\ldots,q_g; p_1,\ldots,p_g; r_1,\ldots,r_{g+1}\right)$. Probably, it would be correct to call  (\ref{79997000988})  a discrete Mumford dynamical system, since its construction is very similar to the construction of continuous Mumford dynamical systems \cite{Vanhaecke}. By the way, in \cite{Hone1}, the authors call the dynamical system generated by (\ref{79997000988}) a Mumford-like system. 

As was shown in \cite{Hone1},  system (\ref{79997000988}) has remarkable integrability properties. The definition of integrability, in \cite{Hone1}, involves an infinite number of invariants and  compatible Poisson structures on $M_g$.  In conclusion, it is proved in \cite{Hone1} that, for any $g\geq 1$, the Mumford system (\ref{79997000988}) has the property of algebraic integrability.

\subsection{Embedding (\ref{6766438}) into Mumford system (\ref{79997000988})}

Observe, that  the number of parameters $\left(u_0,\ldots, u_{2g-1}; \alpha_0,\ldots, \alpha_g\right)$ that need to be determined in order to fully specify the sequence via (\ref{6766438}) is likewise equal to $3g+1$. All this suggests that the two dynamical systems -- the first of which is the discrete Mumford dynamical system and the second, which is defined by the associated  recurrence  (\ref{6766438}) -- might be equivalent to each other.  At any rate, Theorem \ref{8975465} tells us that any solution of (\ref{6766438}) yields a solution of the system (\ref{79997000988}). This embedding and the corresponding tools are sufficient for us to prove Theorem \ref{76888766}.
In what follows, we will talk about some specific features of this Lax representation, namely its connection with a certain continued fraction.   Additionally, establishing the relationship between (\ref{6766438})  and the Mumford dynamical system furnishes a foundation for the study of their integrability properties.

\subsection{Invariants for discrete Mumford dynamical system}

We do not aim to study the integrability properties of (\ref{6766438}), but for our purposes the following will be needed.
By virtue of (\ref{778}), the matrices $L_n$ are similar to each other for different values of $n\in\mathbb{Z}$. Therefore, 
\[
f=-\det L_n = P_n^2+Q_nR_n
\] 
depends on $x$ but not on $n$. By construction, $f$ is a polynomial in $x$ of the form $f=1+\sum_{j=1}^{2g+1}f_j x^j$, with coefficients
\begin{equation}
f_k=\sum_{j=0}^k p_{j, n}p_{k-j, n}+\sum_{j=0}^{k-1} q_{j, n}r_{k-j, n},
\label{7854387}
\end{equation}
where it is supposed that $p_{0, n}=1$ and  $q_{0, n}=2$. Notice that, in principle, formula (\ref{7854387}) is universal and at first glance does not depend on $g$; however, one must take into account that $p_{j, n}=q_{j, n}=r_{j+1, n}=0$ for all $j\geq g+1$.

Let $y(x)$ be an algebraic function defined by relation 
\begin{equation}
y^2=f(x)=P_n^2+Q_nR_n.
\label{7887657}
\end{equation}
Consider its expansion into a power series $y=1+\sum_{j\geq 1}H_j x^j$. Obviously, the coefficients $H_j$ are polynomials in $\left(f_1,\ldots, f_{2g+1}\right)$ and are essentially invariants for the system (\ref{79997000988}).

It would be interesting, based on the explicit form of the coefficients (\ref{7766548888}), (\ref{7766548}) and  (\ref{5643987}) of the polynomials appearing in the Lax representation, to find the explicit form of the invariants $\left(H_1,\ldots, H_{2g+1}\right)$. However, here we encounter technical difficulties. Nevertheless, one can calculate something. Let us compute, for example, an invariant  $H_1$. For any $g\geq 1$, one has
\begin{eqnarray*}
H_1&=&\frac{f_1}{2}=p_{1, n}+r_{1, n}  \\
&=&T^1_{2g}(n)+\frac{T_{g-1, 1}(n+1)}{\prod_{s=0}^{2g-2} u_{n+s}}+\frac{T_{g, 0}(n)}{\prod_{s=0}^{2g-1} u_{n+s}}.
\end{eqnarray*}
\begin{example}
For example, in the case $g=1$,  $H_1$ has the form
\begin{eqnarray}
H_1&=&u_n+u_{n+1}+\frac{\alpha_1}{u_n}+\frac{\alpha_0+\alpha_1 u_n}{u_nu_{n+1}} \label{675400} \\
&=&\frac{t_nt_{n+3}}{t_{n+1}t_{n+2}}+\frac{t_{n+1}t_{n+4}}{t_{n+2}t_{n+3}}+\alpha_1\left(\frac{t_{n+1}t_{n+2}}{t_nt_{n+3}}+\frac{t_{n+2}t_{n+3}}{t_{n+1}t_{n+4}}\right)+\alpha_0 \frac{t_{n}t_{n+4}}{t_{n+2}t_{n+3}}.
\label{6000075400}
\end{eqnarray}
Expression (\ref{6000075400}) represents a well-known invariant for Somos-5. Relation (\ref{675400}) says that $\left(u_n, u_{n+1}\right)$ lies on the curve defined by the equation
\[
\left(X+Y\right)XY-H_1XY+\alpha_0+\alpha_1\left(X+Y\right)=0.
\]
It turns out that the curve defined by this equation is birationally equivalent to an elliptic curve for which the equation can be written in Weierstrass form. This fact, in turn, is the basis for the analytic form of the terms of the sequence defined by the Somos-5 recurrence. For details, see the paper \cite{Hone2}.

\end{example}

\subsection{Continued fraction}

Here, following \cite{Hone1}, we demonstrate the distinctive feature of the Lax representation (\ref{778}) for the Mumford system, namely, its link to a certain continued fraction.
The following lemma is proved by direct computations. We will see below, it is important for proving the Laurent property for our homogeneous recurrences.
\begin{lemma} \cite{Hone1}
Let $\left(P_n, Q_n, R_n\right)_{n\in\mathbb{Z}}$ be a sequence of triples of polynomials satisfying (\ref{79997000988}), where $u_n$ is given by (\ref{7999706548}). Then the relation 
\begin{equation}
\left(y+P_n\right)\left(y+P_{n+1}\right)=Q_n\left(y+P_{n+1}\right)+u_n x Q_nQ_{n+1},\;\; \forall n\in\mathbb{Z}
\label{66555436}
\end{equation}
holds.
\end{lemma}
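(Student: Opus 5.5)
We want to prove that if $(P_n,Q_n,R_n)$ satisfy (\ref{79997000988}) with $u_n$ given by (\ref{7999706548}), and $y^2=f=P_n^2+Q_nR_n$, then (\ref{66555436}) holds. The plan is to expand both sides, reduce everything to the three polynomials at index $n$, and check that the difference is a multiple of $y^2-P_n^2-Q_nR_n$. The identity is then purely algebraic in the ring of functions of $x$ adjoined with $y$, using only the componentwise relations (\ref{79997000988}) and the invariance of $f$.

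First I expand the left-hand side and subtract $Q_n(y+P_{n+1})$. This gives
\[
y^2+y\left(P_n+P_{n+1}-Q_n\right)+P_nP_{n+1}-Q_nP_{n+1}.
\]
By the first relation $P_{n+1}=Q_n-P_n$, the coefficient of $y$ vanishes. The constant part becomes
\[
P_{n+1}(P_n-Q_n)=-P_{n+1}^2 .
\]
So the claim reduces to the polynomial identity
\[
y^2-P_{n+1}^2=u_nxQ_nQ_{n+1}.
\]

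Next I use $y^2=f=P_{n+1}^2+Q_{n+1}R_{n+1}$. This form is legitimate because $\det L_n$ is independent of $n$, which follows from the similarity $L_nM_n=M_nL_{n+1}$; I will recall this, or verify it directly from (\ref{79997000988}). The left-hand side of the reduced identity then equals $Q_{n+1}R_{n+1}$. The third relation gives $R_{n+1}=u_nxQ_n$, so it equals $u_nxQ_nQ_{n+1}$, which is exactly what is required.

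The only point needing care is justifying that $P_n^2+Q_nR_n$ is $n$-independent using (\ref{79997000988}) alone, rather than the matrix form. For this I compute $P_{n+1}^2+Q_{n+1}R_{n+1}$, substituting $P_{n+1}=Q_n-P_n$ and
\[
Q_{n+1}R_{n+1}=u_nxQ_n\cdot\frac{2P_n-Q_n+R_n}{u_nx}=Q_n(2P_n-Q_n+R_n).
\]
The sum is
\[
Q_n^2-2Q_nP_n+P_n^2+2P_nQ_n-Q_n^2+Q_nR_n=P_n^2+Q_nR_n,
\]
which is a routine check. Relation (\ref{7999706548}) plays no role beyond ensuring that the second relation in (\ref{79997000988}) yields a polynomial $Q_{n+1}$: the constant term of $2P_n-Q_n$ vanishes, and the $x$-coefficient equals $u_n$ times the constant term $2$ of $Q_{n+1}$.
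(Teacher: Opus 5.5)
Your proof is correct and follows essentially the paper's route: you expand, kill the coefficient of $y$ with $P_{n+1}=Q_n-P_n$, and check the remaining constant term against $y^2=f$. The only difference is cosmetic. You compare with $f$ written at index $n+1$ and close with $R_{n+1}=u_nxQ_n$, whereas the paper uses $f=P_n^2+Q_nR_n$ together with the first two relations. Your direct check that $P_n^2+Q_nR_n$ is $n$-independent is a welcome addition, since it avoids relying on the invertibility of $M_n$ that the similarity argument tacitly assumes.
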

\begin{proof}
The proof, as we have already said, proceeds by direct computation. Rewrite (\ref{66555436}) as
\[
y^2+\left(P_n+P_{n+1}-Q_n\right)y+P_nP_{n+1}-Q_nP_{n+1}-u_n x Q_nQ_{n+1}=0.
\]
By the first relation in (\ref{79997000988}) the coefficient of $y$ is zero. And now, taking into account (\ref{7887657}), it remains to be verified that relation (\ref{66555436}) is indeed an identity by virtue of the first and second relations in (\ref{79997000988}).
\end{proof}
Let us denote  
\begin{equation}
F_n=\frac{y+P_n}{Q_n}.
\label{66360097}
\end{equation}
With this notation, we can rewrite (\ref{66555436}) as $F_n=1+u_n x/F_{n+1}$, which in turn yields the remarkable  continued fraction expansion 
\begin{equation}
F_n(x)=St_n(x),
\label{6636}
\end{equation}
where
\begin{equation}
St_n(x)=1+\frac{u_n x}{1+\displaystyle{\frac{u_{n+ 1} x}{1+\displaystyle{\frac{u_{n+ 2} x}{1+\cdots}}}}}.
\label{6639998536}
\end{equation}

To be more precise, we  talk about an infinite sequence of continued fractions $\left(St_n(x)\right)_{n\in\mathbb{Z}}$ defined by (\ref{6639998536}). By definition, this sequence   satisfies a recurrence 
\begin{equation}
St_nSt_{n+1}=St_{n+1}+u_nx.
\label{69998656}
\end{equation}

\subsection{Birational mapping}

Let us expand the continued fraction $St_n(x)$  into a power series
\begin{equation}
St_n(x)=1+\sum_{j\geq 1}(-1)^{j+1} s_{j, n} x^j.
\label{699986000956}
\end{equation}
The coefficients $s_{j, n}$ turn out to be certain homogeneous polynomials in $u_n$.
For example, the first three coefficients $s_{j, n}$ are as follows:
\[
s_{1, n}=u_n,\; s_{2, n}=u_nu_{n+ 1}\;\;\mbox{and}\;\;  s_{3, n}=u_nu_{n+ 1}\left(u_{n+ 1}+u_{n+ 2}\right).
\]
One sees that the coefficient $s_{j, n}$, for any $j\geq 1$, is a polynomial in variables $\left(u_n,\ldots u_{n+j-1}\right)$, but it
is important to note that it depends linearly on the variable $u_{n+j-1}$. This means that $u_{n+j-1}$, for any $j\geq 1$, can be expressed unambiguously as a rational function of the
variables $\left(s_{1, n},\ldots, s_{j, n}\right)$. And this, in turn, implies that  relation (\ref{699986000956}) can be interpreted as a generating one for the birational mapping $\left(u_n, u_{n+1},\ldots\right)\leftrightarrow \left(s_{1, n}, s_{2, n},\ldots\right)$.

Let us  explain why we need this birational map. If the initial data $\left(u_0,\ldots, u_{2g-1}\right)$ for the associated recurrence are given, then, by (\ref{699986000956}), we have a finite set of polynomials $s_j=s_j\left(u_0,\ldots, u_{j-1}\right),\; j=1,\ldots, 2g$.
Denote 
\[
C(x)=St_0(x)-1=\sum_{j\geq 1}(-1)^{j+1} s_{j}x^j.
\] 
\begin{remark}
Here and in what follows, for simplicity, we use the notation $s_j=s_{j, 0}$, $p_j=p_{j, 0}$, $q_j=q_{j, 0}$ and so on.
\end{remark}

Making the substitution $F_0\rightarrow St_0$, we obtain $y=-P+Q\left(1+C\right)$. Substituting the latter into (\ref{7887657}) yields 
\begin{equation}
QC^2+2\left(Q-P\right)C-2P+Q-R=0,
\label{66361111}
\end{equation}
where
\[
P=1+\sum_{j= 1}^g p_{j}x^j,\;\; Q=2+\sum_{j= 1}^g q_{j}x^j,\;\; \mbox{and}\;\;   R=\sum_{j= 1}^{g+1} r_{j}x^j.
\]
\begin{example} \label{78642390}
Let us see how  relation (\ref{66361111}) works in the case $g=1$. Suppose that the coefficients $\left(q_1, p_1, r_1, r_2\right)$ are defined as  (\ref{8754309}), (\ref{875439986509}), (\ref{875476486509777})  and (\ref{875476486509}).  For the initial data one has
\begin{equation}
s_1=u_0=\frac{t_0t_3}{t_1t_2},\; s_2=u_0u_1=\frac{t_0t_4}{t_2^2}
\label{7864299876390}
\end{equation}
and
\begin{equation}
s_m=\left(q_1-p_1\right)s_{m-1}+\sum_{j=1}^{m-1}s_js_{m-j}-\frac{q_1}{2}\sum_{j=1}^{m-2}s_js_{m-j},\;\; m\geq 3.
\label{6644321}
\end{equation}
The coefficients of (\ref{6644321}) belong to the ring $\mathcal{L}_5=\mathbb{Z}[t_0^{\pm 1}, t_1^{\pm 1}, t_2^{\pm 1}, t_3^{\pm 1},  t_{4}^{\pm 1}; \alpha_0, \alpha_1]$, while the initial data $\left(s_1, s_2\right)$ for the recurrence (\ref{6644321}) also belong to this ring. The right-hand side of (\ref{6644321}) contains only multiplications and summations. This means that $s_n\in\mathcal{L}_5$ for all $n\geq 1$.
\end{example}

\subsection{Proof of Theorem \ref{76888766}}

We  present a proof of Theorem \ref{76888766}, essentially following the scheme given in \cite{Hone1}.
Calculations in Example \ref{78642390} can be  generalized for any $g\geq 1$. An finite set of Laurent polynomials 
\begin{equation}
s_j=s_j\left(u_0,\ldots, u_{j-1}\right)=s_j\left(t_0^{\pm 1},\ldots, t_{j+2}^{\pm 1}\right),\;\;j=1,\ldots, g+1
\label{66443887621}
\end{equation}
serves as the initial data for the recurrence relation
\[
s_m=\sum_{j=1}^g(-1)^j\left(p_j-q_j\right) s_{m-j}+\frac{1}{2}\sum_{k=0}^g (-1)^k q_k \sum_{k=1}^{m-k-1} s_j s_{m-j-k},\;\; m\geq g+2
\]
which is derived from (\ref{66361111}). Notice that we put $q_0=2$ here.  An example of the initial data  would be (\ref{7864299876390}). Let us explain once again that the initial data (\ref{66443887621}) are determined by the mapping $\left(u_0, u_{1},\ldots\right)\rightarrow \left(s_{1}, s_{2},\ldots\right)$ defined by a generating relation
\[
St_0(x)=1+\sum_{j\geq 1}(-1)^{j+1} s_{j} x^j,
\]
and further use of substitution  (\ref{676678888768}). 

According to Lemma \ref{6754329876}, the coefficients of this recurrence belong to the ring $\mathcal{L}_{2g+3}$. The initial data also belong to this ring. By induction, we obtain that $s_n\in\mathcal{L}_{2g+3}$ for all $n\geq 1$.

Let us put $\Delta_{-2}=\Delta_{-1}=\Delta_{0}=1$ and define the remaining terms of the sequence $\left(\Delta_n\right)_{n\geq -2}$  as the Hankel determinants: $\Delta_{2k-1}=\det\left(s_{n+m-1}\right)$ and $\Delta_{2k}=\det\left(s_{n+m}\right)$, 
where it is supposed that $n$ and $m$ run from $1$ to $k$.  It is an elementary fact that $\Delta_{n}\in\mathcal{L}_{2g+3},\; \forall n\geq 1$ provided that $s_n\in\mathcal{L}_{2g+3},\; \forall n\geq 1$. Now we need to prove that $t_n\in\mathcal{L}_{2g+3},\;\; \forall n\geq 0$. The following technical lemma will be helpful to this aim.
\begin{lemma}
\label{66654320987}
\cite{Hone1}
The relations
\begin{equation}
t_{2k}=t_0\left(\frac{t_2}{t_0}\right)^k\Delta_{2k-2}\;\; \mbox{and}\;\; t_{2k+1}=\frac{t_0t_1}{t_2}\left(\frac{t_2}{t_0}\right)^{k+1}\Delta_{2k-1},\;\; \forall k\geq 0
\label{986453}
\end{equation}
are identities.
\end{lemma}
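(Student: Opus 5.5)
The approach is to use the classical connection between Stieltjes continued fractions and Hankel determinants. By (\ref{6639998536}), $St_0(x)$ is an S-fraction in the variable $x$ with coefficients $u_0,u_1,\ldots$. With the alternating signs in (\ref{699986000956}), $St_0(-x)=1-\sum_{j\ge1} s_j x^j$. Up to the normalisation $s_0=1$, this is the generating series of the moment sequence for the fraction $1/(1-u_0x/(1-u_1x/\cdots))$.

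The standard Stieltjes--Hankel theory (Heilermann's formula, or the qd-algorithm) then expresses the partial quotients through ratios of Hankel determinants. Concretely, there are two families of determinants: $A_k=\det(s_{i+j-1})_{i,j=1}^k$, which is our $\Delta_{2k-1}$, and $B_k=\det(s_{i+j})_{i,j=1}^k$, which is our $\Delta_{2k}$. I will normalise with $s_0=1$ so that the indexing matches the statement. In terms of these, the coefficients satisfy
\[
u_{2k}=\frac{\Delta_{2k+1}\Delta_{2k-2}}{\Delta_{2k-1}\Delta_{2k}},\qquad u_{2k+1}=\frac{\Delta_{2k+2}\Delta_{2k-1}}{\Delta_{2k}\Delta_{2k+1}}.
\]
Equivalently, $u_m=\Delta_{m+1}\Delta_{m-2}/(\Delta_m\Delta_{m-1})$ for all $m\ge0$, with the conventions $\Delta_{-2}=\Delta_{-1}=\Delta_0=1$. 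I would either quote this or prove it by the usual contraction/Hankel argument. The base cases are a quick sanity check: $u_0=s_1=\Delta_1$ and $u_0u_1=s_2=\Delta_2$, which agree with (\ref{7864299876390}).

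Next, I will define $\tilde t_n$ by the right-hand sides of (\ref{986453}). The claim then reduces to showing $\tilde t_n=t_n$. Both sequences agree for $n=0,1,2$; for $\tilde t_1$ and $\tilde t_2$ this is checked directly using $\Delta_{-1}=\Delta_0=1$.

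The inductive step compares the substitution (\ref{676678888768}) with the Hankel formula above. Form $\tilde u_n=\tilde t_n\tilde t_{n+3}/(\tilde t_{n+1}\tilde t_{n+2})$. The gauge factors $t_0(t_2/t_0)^k$ and $\frac{t_0t_1}{t_2}(t_2/t_0)^{k+1}$ have the form $AB^n$ separately on each parity class. In the combination $\tilde t_n\tilde t_{n+3}/(\tilde t_{n+1}\tilde t_{n+2})$ each parity appears once in the numerator and once in the denominator, and the powers of $t_2/t_0$ cancel; this is the same mechanism as in Proposition \ref{00444327}. The determinant indices then give exactly $\Delta_{n+1}\Delta_{n-2}/(\Delta_n\Delta_{n-1})=u_n$.

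So $\tilde u_n=u_n=t_nt_{n+3}/(t_{n+1}t_{n+2})$. Since $t_0,t_1,t_2$ agree, this relation determines $t_{n+3}$ recursively, and induction on $n$ gives $t_n=\tilde t_n$ for all $n\ge0$.

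The main obstacle is the Hankel-determinant formula for the S-fraction coefficients, in particular keeping the indexing and sign conventions consistent with $(-1)^{j+1}s_j$ and with the shifted determinants $\Delta_{2k}$. I would prove it by comparing $St_0$ with its even contraction (a J-fraction), whose coefficients are classically given by ratios of $A_k$ and $B_k$. Alternatively, I would verify the two Sylvester/Desnanot–Jacobi identities relating consecutive $A_k$ and $B_k$ to the recursion (\ref{69998656}).
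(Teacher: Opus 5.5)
Your argument is correct. The paper gives no proof of its own here: it cites \cite{Hone1} and remarks that the identities follow from the definitions of $s_j$ and $\Delta_n$ alone. Your route is the standard argument behind that citation: the Hankel formula $u_m=\Delta_{m+1}\Delta_{m-2}/(\Delta_m\Delta_{m-1})$, then the parity-wise cancellation of the gauge factors, then induction via $t_{n+3}=u_nt_{n+1}t_{n+2}/t_n$. Your formula for $u_m$ is right.

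The justification you give for it, however, misidentifies the moment sequence, and this is exactly the step you flagged as the main obstacle. The $s_j$ are not the moments of $1/(1-u_0x/(1-u_1x/\cdots))$: that fraction has second moment $u_0(u_0+u_1)$, whereas $s_2=u_0u_1$. Quoting the classical theorem for that fraction would give the wrong determinants.

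The correct identification is
\[
\sum_{j\geq1}s_jx^j=1-St_0(-x)=\frac{u_0x}{1-u_1x/(1-u_2x/\cdots)}.
\]
So $s_j=u_0\mu'_{j-1}$, where $\sum_{n\geq0}\mu'_nx^n=1/(1-u_1x/(1-u_2x/\cdots))$ are the moments of the shifted fraction. Set $H^{(0)}_k=\det(\mu'_{i+j})_{i,j=0}^{k-1}$ and $H^{(1)}_k=\det(\mu'_{i+j+1})_{i,j=0}^{k-1}$. Then $\Delta_{2k-1}=u_0^kH^{(0)}_k$ and $\Delta_{2k}=u_0^kH^{(1)}_k$. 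The classical product formulas from the even and odd contractions give
\[
\Delta_{2k-1}=u_0^k\prod_{i=1}^{k-1}\left(u_{2i-1}u_{2i}\right)^{k-i},\qquad
\Delta_{2k}=u_0^ku_1^k\prod_{i=1}^{k-1}\left(u_{2i}u_{2i+1}\right)^{k-i}.
\]
Your ratio formula then follows by a short direct computation, and the rest of your proof goes through unchanged. No $s_0$ enters any of these determinants, so the normalisation $s_0=1$ plays no role.
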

\begin{remark}
Note that this lemma has no relation to our recurrences; it simply gives some identities by virtue of  definition of these quantities.
\end{remark}
Relations (\ref{986453})    show that $t_n\in \mathcal{L}_{2g+3},\;\; \forall n\geq 0$, but how to prove the same for $n\leq -1$? Symmetry, which holds for our homogeneous recurrences (\ref{8878886500097}), comes to the rescue -- a property stated in Proposition \ref{004432767}. It tells us that the considerations given in support of the proof of the theorem remain valid if we calculate iterations $t_n$, by virtue of $R_{2g+3}$ in the opposite direction. Thus, we consider the proof of Theorem \ref{76888766} complete.

\section{Integer sequences} \label{7654098756}

Let $\alpha_j=1$ for $j=0,\ldots, g$ and $t_j=1$ for $j=0,\ldots, N-1$. Denote the corresponding sequence of integers generated by $R_{2g+3}$, say,  by the symbol $S_{2g+3}$. Table 1 below shows the first seven terms of the sequence $S_{2g+3}$ following after the ones. For convenience, prime numbers in this table are highlighted in bold.

\begin{table}[H] 
\caption{Some members of the sequence $S_{2g+3}$} 
\begin{tabular}{|c|c|c|c|c|c|c|c|}
\hline
$S_{5}$ & $\mathbf{2}$  & $\mathbf{3}$  & $\mathbf{5}$ & $\mathbf{11}$ & $\mathbf{37}$  & $\mathbf{83}$  & 274   \\
\hline
$S_{7}$  & $\mathbf{5}$  & $\mathbf{13}$   & $\mathbf{61}$ & 185 & 533  & 3973  & 56161   \\
\hline
$S_{9}$ & $\mathbf{13}$ & $\mathbf{73}$  & $\mathbf{937}$  & $\mathbf{5437}$ & $\mathbf{63853}$  & 458353  & $\mathbf{2506657}$     \\
\hline
$S_{11}$ & 34 & $\mathbf{463}$  & 15709  & 215095 & 7193374  & 102225817  & 3088755979    \\
\hline
\end{tabular}
\end{table}
Recall that here $\left(2,\; 5,\; 13,\; 34,\ldots\right)$ that is, the numbers that come first after the ones,  are the Fibonacci numbers with odd indices. This corresponds to Proposition \ref{00098767}.

\section{Remarks and conjectures} \label{786453209}

Let us summarize what was presented in the paper. We exhibited an infinite family of discrete equations possessing the Laurent property and a Lax representation. The idea of the existence of such a family of homogeneous recurrences possessing the Laurent property was proposed at the end of the paper \cite{Svinin6}, and examples were given there. We began our exposition with recurrence (\ref{6766438})
which is fully defined using discrete polynomials $T^k_s(n)$, and it is clear that these discrete polynomials play an important role in the paper.
Now, allow us to spend below some space  to make a number of  remarks and propose some conjectures. We have gathered here material that is not directly related to the main topic of the article.


\subsection{The relationship between the continued fraction $St_{n}(x)$ and the discrete polynomials $T^k_s(n)$}

Let us show how the discrete polynomials $T^k_s(n)$ are directly related to the continued fraction $St_{n}(x)$ given by  (\ref{6639998536}).
To this end, consider the corresponding  convergent 
\begin{equation}
St_{s, n}(x)=1+\frac{u_n x}{1+\displaystyle{\frac{u_{n+ 1} x}{1+\displaystyle{\frac{\cdots}{1+u_{n+s}x}}}}},\; \forall s\geq 0.
\label{675439871}
\end{equation}
It can be shown, that it is related to the discrete polynomials $T^k_s(n)$ by the relation
\begin{equation}
St_{s, n}(x)=\frac{P_{s+1, n}(x)}{P_{s, n+1}(x)},
\label{675439800097}
\end{equation}
where
\[
P_{s, n}(x)=\sum_{j=0}^{\left\lfloor (s+1)/2 \right\rfloor}T^j_{s-j+1}(n)x^j. 
\]
The following lemma is a simple consequence of Lemma \ref{65439876}.
\begin{lemma}
By (\ref{6540987}) and (\ref{67776587}),  the sequence of continuants $P_{s, n}(x)$ satisfies two recurrence relations 
\begin{eqnarray*}
P_{s+2, n}&=&P_{s+1, n}+u_{n+s+1} x P_{s, n} \\
&=&P_{s+1, n+1}+u_{n} x P_{s, n+2}.
\end{eqnarray*}
simultaneously.
\end{lemma}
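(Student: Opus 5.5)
The plan is to derive both relations directly from Lemma \ref{65439876}, by working with the explicit sum $P_{s,n}(x)=\sum_j T^j_{s-j+1}(n)x^j$ and comparing coefficients of $x^j$. Write $m=s-j+3$. The coefficient of $x^j$ in $P_{s+2,n}$ is $T^j_{s-j+3}(n)=T^j_m(n)$. On the right-hand side of the first relation, the coefficient of $x^j$ is
\[
T^j_{s-j+2}(n)+u_{n+s+1}T^{j-1}_{s-j+2}(n),
\]
since the $x^j$ term of $u_{n+s+1}xP_{s,n}$ comes from $T^{j-1}_{s-(j-1)+1}(n)=T^{j-1}_{s-j+2}(n)$. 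This expression is $T^j_{m-1}(n)+u_{n+m+j-2}T^{j-1}_{m-1}(n)$, because $m+j-2=s+1$. By (\ref{6540987}) it equals $T^j_m(n)$, which is exactly the required coefficient.

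For the second relation, the coefficient of $x^j$ on the right is
\[
T^j_{s-j+2}(n+1)+u_nT^{j-1}_{s-j+2}(n+2)=T^j_{m-1}(n+1)+u_nT^{j-1}_{m-1}(n+2).
\]
By (\ref{67776587}) this equals $T^j_m(n)$, so it matches as well.

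The step that needs care is the boundary bookkeeping for small and large $j$, so that the identification holds for every power of $x$.
\begin{itemize}
\item For $j=0$, use the convention $T^{-1}=0$, so that both right-hand sides reduce to $T^0=1=T^0$.
\item For large $j$, note that $T^j_{s'}=0$ whenever $s'<j$. This means the summation limits $\lfloor (s+1)/2\rfloor$ can be dropped, and the sums taken over all $j\ge 0$.
\item Check that Lemma \ref{65439876} still holds in the degenerate range $m\le j$, where both sides vanish or reduce to the case $T^j_j$. This can be verified from the explicit formula (\ref{67999998}), since the partition of $D_{j,m}$ used in the proof of Lemma \ref{65439876} remains valid there, with empty sets allowed.
\end{itemize}
With these conventions every coefficient matches, and both recurrences hold simultaneously.
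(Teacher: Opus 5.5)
Your argument is correct, but it takes a different route from the paper's. You compare coefficients of $x^j$ and apply (\ref{6540987}) and (\ref{67776587}) directly with $k=j$ and lower index $m=s-j+3$, so that $u_{n+m+j-2}=u_{n+s+1}$, after extending the sums to all $j\ge 0$. Your boundary check could be stated more sharply. The only degenerate case that matters is $m=j$, the top coefficient when $s$ is odd. There both identities reduce to $T^j_j(n)=u_{n+2j-2}T^{j-1}_{j-1}(n)=u_nT^{j-1}_{j-1}(n+2)$, which is immediate from (\ref{6799999000988}). For $m<j$ every coefficient on both sides vanishes trivially.

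The paper proves only the second relation, and by a different mechanism. It writes the one-step unfolding $St_{s+1,n}=1+u_nx/St_{s,n+1}$ of the convergent (\ref{675439871}), inserts the representation $St_{s,n}=P_{s+1,n}/P_{s,n+1}$ from (\ref{675439800097}), and clears the denominator $P_{s+1,n+1}$. That route shows why the $P_{s,n}$ are the continuants of $St_n$. However, it relies on (\ref{675439800097}), which the paper only asserts (``It can be shown''). The natural proof of (\ref{675439800097}), by induction on $s$ through the same unfolding, needs exactly the identity $P_{s+2,n}=P_{s+1,n+1}+u_nxP_{s,n+2}$. So unless (\ref{675439800097}) is established independently, the paper's argument is circular as written.

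Your coefficient comparison uses only the two identities the statement cites and gives both relations at once. It can then be used to prove (\ref{675439800097}) by induction, rather than the other way round.
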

\begin{proof}
Let us prove, for example, the second relation. From (\ref{675439871}) it obviously follows 
\[
St_{s+1, n}=1+\frac{u_n x}{St_{s, n+1}}.
\]
Substituting (\ref{675439800097}) into this relation yields 
\[
\frac{P_{s+2, n}}{P_{s+1, n+1}}=1+u_n x \frac{P_{s, n+2}}{P_{s+1, n+1}}\;\; \mbox{or}\;\; P_{s+2, n}=P_{s+1, n+1}+u_n x P_{s, n+2}.
\]
\end{proof}

\subsection{Origin of recurrence (\ref{6766438})}

Let us say a few words about where recurrence (\ref{6766438}) comes from. In \cite{Svinin4}, we considered the  equation
\begin{equation}
u_{n+2g+1}=u_n\frac{\sum_{j=0}^g c_j T^{s-j}_{s+j}(n+2)}{\sum_{j=0}^g c_j T^{s-j}_{s+j}(n+1)},
\label{87654999}
\end{equation}
where it is assumed that $\left(c_1,\ldots, c_g\right)$ are  arbitrary  parameters, and $c_0=1$. One can easily  show that (\ref{87654999}) is equivalent to  (\ref{6766438}). The number of parameters $\left(u_0,\ldots, u_{2g-1}; \alpha_0,\ldots, \alpha_g\right)$ needed to uniquely determine the sequence $\left(u_n\right)$ using (\ref{6766438}) is $3g+1$. At the same time, the number of parameters $\left(u_0,\ldots, u_{2g}; c_1,\ldots, c_g\right)$ needed to uniquely determine the sequence defined by (\ref{87654999}) is also $3g+1$, and this is a necessary condition for the equivalence of (\ref{6766438}) and (\ref{87654999}).
It is easy to see that the rational function
\[
K=\frac{\sum_{j=0}^g c_j T^{g-j}_{g+j}(n+1)}{\prod_{j=0}^{2g} u_{n+j}}
\]
is an invariant for (\ref{87654999}). 
The relation 
\[
\prod_{j=0}^{2g} u_{n+j}=\frac{\sum_{j=0}^g c_j T^{g-j}_{g+j}(n+1)}{K},
\]
where $K$ keeping to be an arbitrary constant, actually coincides with (\ref{6766438}). 

\subsection{Homogeneous recurrences of even order}

As already noted, the homogeneous recurrences (\ref{8878886500097}) studied in this article can be represented in the form
\begin{equation}
\prod_{j=3}^{N-3} t_{n+j}\cdot \left(t_nt_{n+N}-\alpha_g t_{n+1}t_{n+N-1}-\alpha_0 t_{n+2}t_{n+N-2}\right)=\sum_{j=1}^{g-1}\alpha_j B^j_{N-j-3}(n+1)
\label{88788}
\end{equation}
with $N=2g+3$. The corresponding discrete polynomials $B^k_s(n)$ were defined in Lemma \ref{7654876} by recurrent relation. From this, in particular, it follows that these recurrences can be reduced to Gale-Robinson recurrences (\ref{887}) of type $\left(N, a, c\right)=\left(2g+3, 1, 2\right),\; \forall g\geq 1$. By the way, one can conjecture that these recurrences represent the maximal possible generalization of the corresponding Gale-Robinson recurrence that preserves the Laurent property.

Note that there is nothing to prevent us from setting $N=2g+2$ in (\ref{88788}). Since all the necessary discrete polynomials $B^k_s(n)$ are defined, we   obtain an infinite family of homogeneous recurrences, which we  denote by the symbol $R_{2g+2}$. In the single case $g=1$, the right-hand side of (\ref{88788}) is equal to zero. Furthermore, we consider the product on the left-hand side of the relation to be equal to one. As a result, we obtain the well-known Somos-4 equation
\[
t_nt_{n+4}=\alpha_1 t_{n+1}t_{n+3}+\alpha_0 t_{n+2}t_{n+2}.
\]
Thus, it is the first representative in the family of even-order recurrences. Let us, for illustration, write down explicitly the recurrence -- the second representative in this family. This is how it is written:
\[
t_nt_{n+6}=\alpha_0 t_{n+2}t_{n+4}+\alpha_1 \frac{t_{n+1}t_{n+4}^2+t_{n+2}^2t_{n+5}}{t_{n+3}}+\alpha_2 t_{n+1}t_{n+5}.
\]
The first question that comes to mind is: do these recurrences have the Laurent property? Experimentation with these recurrences suggests that the answer is most likely positive. Proving this fact is therefore an open problem. 

\subsection{Conjectures about inclusions}

After we have defined the homogeneous recurrences of even order, we can propose some conjectures based on actual computations. Other assumptions are in order.
\begin{conjecture}
Any $R_{2g+2}$-sequence, for every $g\geq 1$, satisfies the recurrence $R_{2g+3}$ with suitable coefficients.
\end{conjecture}
Notice that, in the case $g=1$, this is not a problem. Indeed, it is a well-known fact that any Somos-4 sequence satisfies the Somos-5 recurrence \cite{Hone2}. 
The converse statement is generally not true. By the way, in \cite{Svinin7} there is a much more general result that any Somos-4 sequence satisfies  the Gale-Robinson recurrences (\ref{887}) of every type $\left(N, a, c\right)$ with suitable coefficients. 

The following statement seems to also be true.
\begin{conjecture}
Any $R_{2g+3}$-sequence, for every $g\geq 1$, satisfies the recurrence $R_{2g+4}$ with suitable coefficients.
\end{conjecture}
If these two conjectures are true, then we have the following chain of inclusions:
\[
\mathcal{M}_4\subset \mathcal{M}_5\subset \mathcal{M}_6\subset \cdots,
\]
where $\mathcal{M}_N$ denotes the space of all sequences $\left(t_n\right)$ satisfying the recurrence $R_N$.

\section*{Acknowledgments}

This work was carried out within the framework of  the state assignment of the Ministry of Education
and Science of the Russian Federation on the project No.  126021217175-3.

\end{document}